\numberwithin{equation}{section}
\newtheorem{Theorem}{Theorem}[section]
\begin{document}

\allowdisplaybreaks

\newcommand{\arXivNumber}{1704.07003}

\renewcommand{\thefootnote}{}

\renewcommand{\PaperNumber}{071}

\FirstPageHeading

\ShortArticleName{$N$-Bright-Dark Soliton Solution to a Vector SDNLS Equation}

\ArticleName{$\boldsymbol{N}$-Bright-Dark Soliton Solution to a Semi-Discrete\\ Vector Nonlinear Schr\"odinger Equation\footnote{This paper is a~contribution to the Special Issue on Symmetries and Integrability of Dif\/ference Equations. The full collection is available at \href{http://www.emis.de/journals/SIGMA/SIDE12.html}{http://www.emis.de/journals/SIGMA/SIDE12.html}}}

\Author{Bao-Feng FENG~$^\dag$ and Yasuhiro OHTA~$^\ddag$}

\AuthorNameForHeading{B.-F.~Feng and Y.~Ohta}

\Address{$^\dag$~School of Mathematical and Statistical Sciences, The University of Texas Rio Grande Valley,\\
\hphantom{$^\dag$}~Edinburg, TX 78539, USA}
\EmailD{\href{mailto:baofeng.feng@utrgv.edu}{baofeng.feng@utrgv.edu}}

\Address{$^\ddag$~Department of Mathematics, Kobe University, Rokko, Kobe 657-8501, Japan}
\EmailD{\href{mailto:ohta@math.kobe-u.ac.jp}{ohta@math.kobe-u.ac.jp}}

\ArticleDates{Received April 25, 2017, in f\/inal form September 03, 2017; Published online September 06, 2017}

\Abstract{In this paper, a general bright-dark soliton solution in the form of Pfaf\/f\/ian is constructed for an integrable semi-discrete vector NLS equation via Hirota's bilinear method. One- and two-bright-dark soliton solutions are explicitly presented for two-component semi-discrete NLS equation; two-bright-one-dark, and one-bright-two-dark soliton solutions are also given explicitly for three-component semi-discrete NLS equation. The asymptotic behavior is analysed for two-soliton solutions.}

\Keywords{bright-dark soliton; semi-discrete vector NLS equation; Hirota's bilinear method; Pfaf\/f\/ian}

\Classification{39A10; 35Q55}

\renewcommand{\thefootnote}{\arabic{footnote}}
\setcounter{footnote}{0}

\section{Introduction}
The nonlinear Schr\"{o}dinger (NLS) equation
 \begin{gather*} 
{\mathrm i} u_{t}=u_{xx}+2 \sigma |u|^{2} u
\end{gather*}
is a generic model equation describing the evolution of small amplitude and slowly varying wave packets in weakly nonlinear media \cite{Ablowitzbook,APT, Agrawalbook,HasegawaKodama,AgrawalKivsharbook}. It arises in a variety of physical contexts such as nonlinear optics \cite{HasegawaTappert1, HasegawaTappert2}, Bose--Einstein condensates~\cite{BECReview}, water waves~\cite{Benney1967} and plasma physics~\cite{ZakharovPlasma}. The integrability, as well as the bright-soliton solution in the focusing case ($\sigma=1$), was found by Zakharov and Shabat~\cite{ZakharovShabat,ZakharovShsbat2}. The dark soliton was found in the defocusing NLS equation ($\sigma=-1$) \cite{HasegawaTappert2,Makhankov1982,Tsuzuki}, and was observed experimentally in~\cite{Krokeldark1,Weinerdark1}.

The integrable discretization of nonlinear Schr\"{o}dinger equation
\begin{gather*} 
\mathrm{i}q_{n,t}=\big( 1+\sigma |q_{n}|^{2}\big) (q_{n+1}+q_{n-1})
\end{gather*}
was originally derived by Ablowitz and Ladik \cite{AblowitzLadik,AL2}, so it is also called the Ablowitz--Ladik (AL) lattice equation. Similar to the continuous case, it is known that the AL lattice equation, by Hirota's bilinear method, admits the bright soliton solution for the focusing case ($\sigma =1$) \cite{Narita1990,TsujimotoBookchapter}, also the dark soliton solution for the defocusing case ($\sigma =-1)$~\cite{OhtaMaruno}. The inverse scattering transform (IST) has been developed by several authors in the literature \cite{ABB07,BPrinari,Mee15,Konotop92}.
The geometric construction of the AL lattice equation was given by Doliwa and Santini~\cite{DoliwaAL}.

The coupled nonlinear Schr\"{o}dinger equation
\begin{gather}
\begin{split}
& \mathrm{i}u_{t}=u_{xx}+2\big( \sigma _{1}|u|^{2}+\sigma_{2}|v|^{2}\big) u ,\\
& \mathrm{i}v_{t}=v_{xx}+2\big( \sigma _{1}|u|^{2}+\sigma_{2}|v|^{2}\big) v ,
\end{split} \label{CNLS}
\end{gather}
where \looseness=-1 $\sigma _{i}=\pm 1$, $i=1,2$, was f\/irstly recognized being integrable by Yajima and Oikawa~\cite{YajimaOikawa}. For the focusing-focusing case ($\sigma _{1}=\sigma _{2}=1$), the system~(\ref{CNLS}) solved by Manakov via inverse scattering transform (IST), admits the bright-bright soliton solution~\cite{Manakov72}, so it is also called the Manakov system in the literature. For the defocusing-defocusing case, the Manakov system admits bright-dark and dark-dark soliton solution \cite{PrinariMarkGino2006, Lakshmanan1995,Kivshar97}. However, the focusing-defocusing Manakov system admits all types of soliton solutions such as bright-bright solitons, dark-dark soliton, and bright-dark solitons \cite{KannaPRE2006, OhtaJianke, KannaPRE2008BrightDark}. The Manakov system can be easily extended to a~multi-component case, the so-called vector NLS equation. For the continuous vector NLS equation, the $N$-bright soliton solution was obtained in~\cite{Dubrovin1988vNLS,KannaPRE2006,YJZhang1994}; the general bright-dark and dark-dark soliton solutions were obtained in~\cite{Dubrovin1988vNLS,FengJPA2014, Makhankov1982,ParkShin2000, KannaPRE2008BrightDark}. The inverse scattering transform with nonvanishing boundary condition was solved by Prinari, Ablowitz and Biondini~\cite{PrinariMarkGino2006}. We should remark here that the problem of constructing exact soliton solutions to the vector NLS equation and proving their nonsingularity was settled by Dubrovin et al.\ in their landmark paper~\cite{Dubrovin1988vNLS}.

The semi-discrete coupled nonlinear Schr\"{o}dinger equation
\begin{gather}
\mathrm{i}q_{n,t}^{(1)} =\big( 1+\sigma _{1}\big|q_{n}^{(1)}\big|^{2}+\sigma _{2}\big|q_{n}^{(2)}\big|^{2}\big) \big( q_{n+1}^{(1)}+q_{n-1}^{(1)}\big) , \nonumber\\ \mathrm{i}q_{n,t}^{(2)} =\big( 1+\sigma _{1}\big|q_{n}^{(1)}\big|^{2}+\sigma _{2}\big|q_{n}^{(2)}\big|^{2}\big) \big( q_{n+1}^{(2)}+q_{n-1}^{(2)}\big),\label{sdCNLS}
\end{gather}
where $\sigma _{i}=\pm 1$, $i=1,2$, is of importance both mathematically and physically. It was solved by the inverse scattering transform (IST) in~\cite{TsuchidadCNLS, TsuchidasdCNLS}. The general multi-soliton solution in terms of Pfaf\/f\/ians was found by one of the authors recently \cite{OhtasdCNLS}, which is of bright type for the focusing-focusing case ($\sigma _{1}=\sigma _{2}=1$), and is of dark type for the defocusing-defocusing case ($\sigma _{1}=\sigma _{2}=-1$). However, as far as we know, no general mixed-type (bright-dark) soliton solution is reported in the literature, which motivated the present study.

In the present paper, we consider a $M$-coupled semi-discrete NLS equation of all types
\begin{gather}
\mathrm{i}q_{n,t}^{(j)}=\left( 1+\sum_{\mu=1}^{M}\sigma _{\mu}\big|q_{n}^{(\mu)}\big|^{2}\right) \big( q_{n+1}^{(j)}+q_{n-1}^{(j)}\big)
 ,\qquad j=1,2,\dots ,M, \label{MCNLS_sd}
\end{gather}
where $\sigma _{\mu}=\pm 1$ for $\mu=1,\dots M$. For all-focusing case ($\sigma _{\mu}=1$, $\mu=1,\dots, M$), its general $N$-bright soliton solution and the interactions among solitons were studied in~\cite{MarkOhta,OhtaBrightsdCNLS}. However, in contrast with a complete list of the general $N$-soliton solution to the vector NLS equation~\cite{FengJPA2014}, the mixed-type soliton solution of all possible nonlinearities (all possible values of~$\sigma_{\mu}$) is missing. The aim of the present paper is to construct a general $N$-bright-dark soliton solution to the semi-discrete vector NLS equation. The rest of the paper is organized as follows. In Section~\ref{section2}, we provide a general bright-dark soliton solution in terms of Pfaf\/f\/ians to the semi-discrete vector NLS equation (\ref{MCNLS_sd}) and give a rigorous proof by the Pfaf\/f\/ian technique \cite{HirotaBook,OhtaReview, OhtaDiscreteKP}. The one- and two-soliton solutions to two-coupled and three-coupled semi-discrete NLS equation are provided explicitly, respectively, in Section~\ref{section3}. We summarize the paper in Section~\ref{section4} and present asymptotic analysis for two-soliton solution in Appendix~\ref{appendixA}.

\section[General bright-dark soliton solution to semi-discrete vector NLS equation]{General bright-dark soliton solution\\ to semi-discrete vector NLS equation}\label{section2}

Let us consider a general soliton solution consisting of $m$-bright solitons and ($M-m)$-dark solitons to the semi-discrete vector NLS equation~(\ref{MCNLS_sd}). To this end, we introduce the following dependent variable transformations
\begin{gather} \label{dependentvar_trf}
q_{n}^{(j)}=\mathrm{i}^{n}\frac{g_{n}^{(j)}}{f_{n}} ,\qquad
q_{n}^{(m+l)}=\rho _{l}(\mathrm{i}a_{l})^{n}\frac{h_{n}^{(l)}}{f_{n}}e^{\omega _{l}t} ,
\end{gather}
where $j=1,\dots ,m$, $\omega_{l}=s(a_{l}-\bar{a}_{l})$, $|a_{l}|=1$ with $\bar{a}_{l}$ representing the complex conjugate of $a_{l}$, $l=1,\dots,M-m$. Here, $f_n$ is a real-valued function, whereas, $g_n$ and $h_n$ are complex-valued functions. The transformations convert equation~(\ref{MCNLS_sd}) into a set of bilinear equations as follows
\begin{gather}
 D_{t}g_{n}^{(j)}\cdot f_{n}=s\big(g_{n+1}^{(j)}f_{n-1}-g_{n-1}^{(j)}f_{n+1}\big), \qquad j=1,\dots ,m ,\nonumber\\
 ( D_{t}+\omega _{l} ) h_{n}^{(l)}\cdot f_{n}=s\big(a_{l}h_{n+1}^{(l)}f_{n-1}-\bar{a}_{l}h_{n-1}^{(l)}f_{n+1}\big), \qquad l=1,\dots ,M-m , \nonumber\\
sf_{n+1}f_{n-1}-f_{n}^{2}=\sum_{j=1}^{m}\sigma _{j}\big|g_{n}^{(j)}\big|^{2}+\sum_{l=1}^{M-m}\sigma _{l+m}|\rho _{l}|^{2}\big|h_{n}^{(l)}\big|^{2} .
 \label{MCNLS-bilinear_sd}
\end{gather}
Here $s=1+\sum\limits_{l=1}^{M-m}\sigma _{l+m}|\rho _{l}|^{2}$.

In what follows, we give a Pfaf\/f\/ian-type solution to the above bilinear equations.
\begin{Theorem}A set of bilinear equations \eqref{MCNLS-bilinear_sd} admit the following solutions in the form of Pfaffians
\begin{gather*}
f_{n} =\operatorname{Pf} (a_{1},\dots ,a_{2N},b_{1},\dots ,b_{2N}) ,\\
g_{n}^{(j)} =\operatorname{Pf} (d_{0},\beta _{j},a_{1},\dots,a_{2N},b_{1},\dots ,b_{2N}) , \qquad
 h_{n}^{(l)} =\operatorname{Pf} \big(c_{1}^{(l)},\dots ,c_{2N}^{(l)},b_{1},\dots,b_{2N}\big) ,
\end{gather*}
with the elements of the Pfaffians defined as follows
\begin{gather*}
\operatorname{Pf} (a_{j},a_{k})_{n}=\frac{p_{j}-p_{k}}{p_{j}p_{k}-1}\varphi
_{j}(n)\varphi _{k}(n) , \qquad \operatorname{Pf} (d_{0},b_{j})=\operatorname{Pf} (d_{0},\beta _{l})=0 ,\\
\operatorname{Pf} \big(c_{j}^{(l)},c_{k}^{(l)}\big)_{n}=\frac{p_{j}-p_{k}}{p_{j}p_{k}-1}
\frac{p_{j}-a_{l}}{1-a_{l}p_{j}}\frac{p_{k}-a_{l}}{1-a_{l}p_{k}}\varphi
_{j}(n)\varphi _{k}(n) , 
\\
\operatorname{Pf} (a_{j},b_{k})=\delta _{jk} ,\qquad \operatorname{Pf}
\big(c_{j}^{(l)},b_{k}\big)=\delta _{jk},\qquad \operatorname{Pf} (d_{l},a_{k})_{n}=p_{k}^{l}\varphi _{k}(n) , 
\\
\operatorname{Pf} (b_{j},\beta _{l})=
\begin{cases}
0, & 1 \le j \le N, \\
\alpha^{(l)}_{j-N}, & N+1 \le j \le 2N ,
\end{cases} \qquad \operatorname{Pf} (a_{j},\beta _{l})=0 , 
\\
\operatorname{Pf} (b_{j},b_{k}) =
\begin{cases}
b_{jk}, & 1 \le j \le N, \ N+1 \le k \le 2N, \\
0, & {\mathrm{otherwise}},
\end{cases} 
\end{gather*}
with
\begin{gather*}
b_{jk}= \frac{\sum\limits_{l=1}^{m}\alpha _{j}^{(l)}\sigma _{l} \overline{\alpha _{k-N}^{(l)}}}{(p_{j}-p_{k})(p_{j}p_{k}-1)\left( \frac{s}{p_{j}p_{k}}%
-\sum\limits_{l=1}^{M-m}\frac{\sigma _{l+m}|\rho _{l}|^{2}(a_{l}-\bar{a}_{l})^{2}}{(1-a_{l}p_{j})(1-a_{l}p_{k})(1-\bar{a}_{l}p_{j})(1-\bar{a}_{l}p_{k})}\right)
} ,
\end{gather*}
and $\varphi _{j}(n)=p_{j}^{n}e^{\eta _{j}}$, $\eta _{j}=s\big(p_{j}-p_{j}^{-1}\big)t$ $+\eta _{j,0}$ which satisfying $p_{j+N}=\bar{p}_{j}$, $\eta _{j+N,0}=\bar{\eta }_{j,0}$.
\end{Theorem}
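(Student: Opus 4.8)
The plan is to verify the three bilinear equations in \eqref{MCNLS-bilinear_sd} one at a time by the Pfaffian technique: express the $t$-derivatives and the $n$-shifts of $f_n$, $g^{(j)}_n$, $h^{(l)}_n$ through Pfaffians built on the same family of characters, and then recognize each bilinear relation as an instance of a quadratic Pfaffian identity of Pl\"ucker--Jacobi type. All of the $n$- and $t$-dependence is carried by $\varphi_j(n)=p_j^n e^{\eta_j}$, which obeys the elementary rules $\varphi_j(n\pm1)=p_j^{\pm1}\varphi_j(n)$ and $\partial_t\varphi_j(n)=s\big(p_j-p_j^{-1}\big)\varphi_j(n)$; everything rests on propagating these two rules through the Pfaffian entries.

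First I would establish the differential and difference formulas. The characters $d_l$, through $\operatorname{Pf}(d_l,a_k)_n=p_k^l\varphi_k(n)$, are tailored to reproduce the weights $p_k^l$, so that a $t$-derivative can be converted into a bordering by the pair $d_1,d_{-1}$. Concretely, since
\begin{align*}
\partial_t\operatorname{Pf}(a_j,a_k)_n&=s\,\frac{p_j-p_k}{p_jp_k-1}\bigl[\big(p_j-p_j^{-1}\big)+\big(p_k-p_k^{-1}\big)\bigr]\varphi_j(n)\varphi_k(n)\\
&=s\bigl(p_jp_k^{-1}-p_kp_j^{-1}\bigr)\varphi_j(n)\varphi_k(n),
\end{align*}
the derivative of each entry is the antisymmetrized product $\operatorname{Pf}(d_1,a_j)\operatorname{Pf}(d_{-1},a_k)-\operatorname{Pf}(d_1,a_k)\operatorname{Pf}(d_{-1},a_j)$ up to the constant $s$, and the derivative formula for Pfaffians then gives $\partial_t f_n=s\operatorname{Pf}(d_1,d_{-1},a_1,\dots,a_{2N},b_1,\dots,b_{2N})$, with analogous bordered expressions for $\partial_t g^{(j)}_n$ and $\partial_t h^{(l)}_n$ (here one must also fix the auxiliary pairings among the $d_l$ so that the formula propagates through the extra characters $d_0$, $\beta_j$, $c^{(l)}_j$). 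The $n$-shifts $f_{n\pm1}$, $g^{(j)}_{n\pm1}$, $h^{(l)}_{n\pm1}$ are recorded directly from the scaling $\varphi_j\mapsto p_j^{\pm1}\varphi_j$.

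With these formulas in place, the two $D_t$-equations collapse into a four-term Pfaffian identity
\begin{gather*}
\operatorname{Pf}(\bullet,1,2,3,4)\operatorname{Pf}(\bullet)=\operatorname{Pf}(\bullet,1,2)\operatorname{Pf}(\bullet,3,4)-\operatorname{Pf}(\bullet,1,3)\operatorname{Pf}(\bullet,2,4)+\operatorname{Pf}(\bullet,1,4)\operatorname{Pf}(\bullet,2,3),
\end{gather*}
in which the four distinguished characters are taken among $d_0,d_1,d_{-1},\beta_j$ for the $g^{(j)}$-equation and among the dressed letters $c^{(l)}$ for the $h^{(l)}$-equation; the constants $a_l,\bar a_l$ and the shift $\omega_l=s(a_l-\bar a_l)$ then appear automatically from the dressing factor $\frac{p_j-a_l}{1-a_lp_j}$ built into $\operatorname{Pf}(c^{(l)}_j,c^{(l)}_k)_n$. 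This part is essentially bookkeeping once the difference formulas are correct.

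The main obstacle will be the third, quadratic relation $sf_{n+1}f_{n-1}-f_n^2=\sum_j\sigma_j|g^{(j)}_n|^2+\sum_l\sigma_{l+m}|\rho_l|^2|h^{(l)}_n|^2$, since this is where the coupling constants $b_{jk}$ must be matched exactly. Here I would first invoke the reality conditions $p_{j+N}=\bar p_j$, $\eta_{j+N,0}=\bar\eta_{j,0}$ to identify $\overline{g^{(j)}_n}$ and $\overline{h^{(l)}_n}$ with the Pfaffians obtained by interchanging the index $j\leftrightarrow j+N$, so that each $|g^{(j)}_n|^2$ and $|h^{(l)}_n|^2$ becomes a product of two Pfaffians; these products, together with $sf_{n+1}f_{n-1}-f_n^2$, must all be reorganized by the bilinear Pfaffian identity into a common expression. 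The delicate step is to verify that the prescribed $b_{jk}$ --- whose numerator is $\sum_{l}\alpha^{(l)}_j\sigma_l\overline{\alpha^{(l)}_{k-N}}$ and whose denominator contains precisely $s/(p_jp_k)$ minus the rational sum over the $\sigma_{l+m}|\rho_l|^2(a_l-\bar a_l)^2$ terms --- makes the two sides coincide identically in $n$ and $t$. I expect the matching of these rational weights, and the tracking of the signs $\sigma_\mu$ through the complex conjugation, to be the crux of the whole proof.
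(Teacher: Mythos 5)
Your overall architecture does match the paper's proof: your entrywise computation $\partial_t\operatorname{Pf}(a_j,a_k)_n=s\big(p_jp_k^{-1}-p_kp_j^{-1}\big)\varphi_j\varphi_k=s\big[\operatorname{Pf}(d_1,a_j)\operatorname{Pf}(d_{-1},a_k)-\operatorname{Pf}(d_1,a_k)\operatorname{Pf}(d_{-1},a_j)\big]$ is exactly how the paper obtains $\partial_tf_n=s\operatorname{Pf}(d_{-1},d_1,\dots)_n$, and the first bilinear equation is indeed an instance of the four-term identity you quote, with distinguished characters $d_0,d_{-1},d_1,\beta_j$ over the common body $(a_1,\dots,a_{2N},b_1,\dots,b_{2N})$; the conjugation symmetry $j\mapsto j+N$ for the third equation is also the paper's device.

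There is, however, a genuine gap in your treatment of the second bilinear equation. You propose to apply the four-term identity with the distinguished characters taken ``among the dressed letters $c^{(l)}$'', but that identity only relates Pfaffians sharing a common body: applied over the $c^{(l)}$-body it yields relations among $h^{(l)}_n$ and its own borderings, and can never produce the cross terms $h^{(l)}_{n\pm1}f_{n\mp1}$ demanded by the equation, since $f_n$ lives on the $a$-body. The missing idea is the paper's re-expression lemma $h^{(l)}_n=\operatorname{Pf}\big(d_0,\bar d^{(l)}_0,a_1,\dots,a_{2N},b_1,\dots,b_{2N}\big)_n$, where the auxiliary character $\bar d^{(l)}_0$ carries the M\"obius dressing via $\operatorname{Pf}\big(\bar d^{(l)}_0,a_j\big)=p_j^n\frac{p_j-a_l}{1-a_lp_j}e^{\eta_j}$, with $\operatorname{Pf}\big(d_0,\bar d^{(l)}_0\big)=1$, $\operatorname{Pf}\big(d_1,\bar d^{(l)}_0\big)=a_l$, $\operatorname{Pf}\big(d_{-1},\bar d^{(l)}_0\big)=\bar a_l$. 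This puts $h^{(l)}$ on the same body as $f$, and the shift formulas then acquire nontrivial prefactors, $h^{(l)}_{n+1}=\bar a_l\operatorname{Pf}\big(d_1,\bar d^{(l)}_0,\dots\big)_n$ and $h^{(l)}_{n-1}=a_l\operatorname{Pf}\big(d_{-1},\bar d^{(l)}_0,\dots\big)_n$, together with $\big(\partial_t+s(a_l-\bar a_l)\big)h^{(l)}_n=s\operatorname{Pf}\big(d_0,d_{-1},d_1,\bar d^{(l)}_0,\dots\big)_n$; it is precisely these prefactors that generate the coefficients $a_l$, $\bar a_l$ and the frequency $\omega_l$ in the bilinear form --- they do not ``appear automatically'' from bookkeeping. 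Relatedly, for the third equation your plan halts exactly at what you yourself call the crux: the paper completes it by expanding $f_{n\pm1}$, $h^{(l)}_n$, $\bar h^{(l)}_n$ along the bordering characters, collecting the coefficient of $\operatorname{Pf}(b_j,b_k)\operatorname{Pf}(d_0,\dots,\hat b_j,\dots)\operatorname{Pf}(d_0,\dots,\hat b_k,\dots)$ in the combination $s\big(f_{n+1}f_{n-1}-f_n^2\big)-\sum_l\sigma_{l+m}|\rho_l|^2\big(|h^{(l)}_n|^2-f_n^2\big)$, and verifying the rational identity that this coefficient times $b_{jk}$ reduces to $-\sum_l\alpha^{(l)}_j\sigma_l\overline{\alpha^{(l)}_{k-N}}=-\sum_l\sigma_l\operatorname{Pf}(b_j,\beta_l)\operatorname{Pf}(b_k,\bar\beta_l)$, whence the double sum is recognized as $\sum_l\sigma_l\big|g^{(l)}_n\big|^2$; identifying this step without supplying it leaves the heart of the proof undone.
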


\begin{proof}It can be shown that
\begin{gather*}
\frac{{\rm d}}{{\rm d}t}f_{n}=s\operatorname{Pf} (d_{-1},d_{1},\dots )_{n} , \qquad \text{where} \quad \operatorname{Pf} (d_{-1},d_{1})\equiv 0 , \qquad \operatorname{Pf} (d_{\pm 1},b_{j})\equiv 0 ,\\
f_{n+1} = \operatorname{Pf} (d_0,d_{1}, \dots)_n , \qquad \text{where} \quad \operatorname{Pf} (d_{0},d_{1})\equiv 1 ,
\\
f_{n-1} = \operatorname{Pf} (d_0,d_{-1}, \dots)_n , \qquad \text{where} \quad \operatorname{Pf} (d_{0},d_{-1})\equiv 1,
\end{gather*}
and
\begin{gather*}
\frac{{\rm d}}{{\rm d}t}g_{n}^{(j)}=s\operatorname{Pf} (d_{0},d_{-1},d_{1},\beta _{j},\dots)_{n} ,\\
g_{n+1}^{(j)}=\operatorname{Pf} (d_{1},\beta _{j},\dots )_{n} ,\qquad g_{n-1}^{(j)}=\operatorname{Pf} (d_{-1},\beta _{j},\dots )_{n} ,
\end{gather*}
where $\operatorname{Pf} (d_0,d_1, a_1, \dots, a_{2N}, b_1, \dots , b_{2N})$ is abbreviated by $\operatorname{Pf} (d_0, d_1, \dots)$, so as other similar Pfaf\/f\/ians. Thus, an algebraic identity of Pfaf\/f\/ian
\begin{gather*}
\operatorname{Pf} (d_{0},d_{-1},d_{1},\beta_j,\dots )_{n}\operatorname{Pf} (\dots)_{n}=\operatorname{Pf} (d_{0},d_{-1},\dots )_{n}\operatorname{Pf} (d_{1},\beta_j,\dots)_{n} \\
\qquad {}-\operatorname{Pf} (d_{0},d_{1},\dots )_{n}\operatorname{Pf} (d_{-1},\beta_j ,\dots )_{n}+\operatorname{Pf} (d_{0},\beta_j,\dots )_{n}\operatorname{Pf} (d_{-1},d_{1},\dots )_{n} ,
\end{gather*}
together with above Pfaf\/f\/ian relations gives
\begin{gather*}
\left( \frac{{\rm d}}{{\rm d}t}g^{(j)}_{n}\right) \times f_{n}=sg^{(j)}_{n+1}\times
f_{n-1}-sg^{(j)}_{n-1}\times f_{n+1}+g^{(j)}_{n}\times \left( \frac{{\rm d}}{{\rm d}t}f_{n}\right) ,
\end{gather*}
which is exactly the f\/irst bilinear equation. Next we prove the second bilinear equation. It can also be shown that
\begin{gather*}
h_{n}^{(l)}=\operatorname{Pf} \big(d_{0},\bar{d}^{(l)}_{0},\dots \big)_{n} ,\qquad
h_{n+1}^{(l)}=\bar{a}_{l}\operatorname{Pf} \big(d_{1},\bar{d}^{(l)}_{0},\dots \big)_{n} ,\qquad
h_{n-1}^{(l)}={a}_{l}\operatorname{Pf} \big(d_{-1},\bar{d}^{(l)}_{0},\dots \big)_{n} ,\!\\
\left( \frac{{\rm d}}{{\rm d}t}+s(a_{l}-\bar{a}_{l})\right) h_{n}^{(l)}=s\operatorname{Pf} \big(d_{0},d_{-1},d_{1},\bar{d}^{(l)}_{0},\dots \big)_{n} ,
\end{gather*}
where
\begin{gather*}
\operatorname{Pf} \big(\bar{d}^{(l)}_{0},a_{j}\big)=p_{j}^{n}\left( \frac{p_{j}-a_{l}}{1-a_{l}p_{j}}\right) e^{\eta _{j}} ,\qquad \operatorname{Pf} \big(\bar{d}^{(l)}_{0},b_{j}\big)=0 ,\qquad \operatorname{Pf} \big(d_{0},\bar{d}^{(l)}_{0}\big)=1,\\
\operatorname{Pf} \big(d_{-1,}\bar{d}^{(l)}_{0}\big)=\bar{a}_{l},\qquad \operatorname{Pf} \big(d_{1},\bar{d}^{(l)}_{0}\big)=a_{l} .
\end{gather*}
Therefore, an algebraic identity of Pfaf\/f\/ian
\begin{gather*}
\operatorname{Pf} \big(d_{0},d_{-1},d_{1},\bar{d}^{(l)}_{0},\dots \big)_{n}\operatorname{Pf} (\dots )_{n}=\operatorname{Pf} (d_{0},d_{-1},\dots )_{n}\operatorname{Pf} \big(d_{1},\bar{d}^{(l)}_{0},\dots \big)_{n} \\
\qquad {}-\operatorname{Pf} (d_{0},d_{1},\dots )_{n}\operatorname{Pf} (d_{-1},\bar{d_{0}},\dots )_{n}+\operatorname{Pf} \big(d_{0},\bar{d}^{(l)}_{0},\dots \big)_{n}\operatorname{Pf}(d_{-1},d_{1},\dots )_{n} ,
\end{gather*}
together with above Pfaf\/f\/ian relations gives
\begin{gather*}
\left( \frac{{\rm d}}{{\rm d}t}+s(a_{l}-\bar{a}_{l})\right) h_{n}^{(l)}\times f_{n}=s\big(a_{l}h_{n+1}^{(l)} f_{n-1}-\bar{a}_{l}h_{n-1}^{(l)}
f_{n+1}\big)+h_{n}^{(l)} \left( \frac{{\rm d}}{{\rm d}t}f_{n}\right) ,
\end{gather*}
which is nothing but the second bilinear equation. Now let us proceed to the proof of the third bilinear equation. To this end, we need to def\/ine
\begin{gather*}
\operatorname{Pf} (a_{j},\bar{\beta}_{l})=0 ,\qquad \operatorname{Pf} (b_{j},\bar{\beta}_{l}) =
\begin{cases}
0, & 1 \le j \le N ,\\
\overline{\alpha^{(l)}_{j-N}}, & N+1 \le j \le 2N ,
\end{cases}\\
\operatorname{Pf} \big(d_{0},{\overline{d_{0}'}}^{(l)}\big)=1 , \qquad
\operatorname{Pf} \big(\bar{c}_{j}^{(l)},\bar{c}_{k}^{(l)}\big)_{n} =\frac{p_{j}-p_{k}}{p_{j}p_{k}-1}\frac{1-a_{l}p_{j}}{p_{j}-a_{l}}\frac{1-a_{l}p_{k}
}{p_{k}-a_{l}}\varphi _{j}(n)\varphi _{k}(n) ,\\
\operatorname{Pf} \big({\overline{d_{0}'}}^{(l)},a_{j}\big)=p_{j}^{n}\left( \frac{1-a_{l}p_{j}}{p_{j}-a_{l}}\right) e^{\eta _{j}} , \qquad
\operatorname{Pf} \big(\bar{c}_{j}^{(l)},b_{k}\big) =\delta_{jk} , \qquad
\operatorname{Pf} \big({\overline{d_{0}'}}^{(l)},b_{j}\big)=0 .
\end{gather*}
Then from the fact
\begin{gather*}
\overline{\operatorname{Pf} ({a}_{j},{a}_{k})}=\operatorname{Pf} (a_{j^{\prime}},a_{k^{\prime }}) ,\qquad \overline{\operatorname{Pf} ({b}_{j},{b}_{k})}=\operatorname{Pf}(b_{j^{\prime }},b_{k^{\prime }}) ,
\end{gather*}%
where $j^{\prime }=j+N$ mod$(2N)$, $k^{\prime }=k+N$ mod$(2N)$, we obtain
\begin{gather*}
\bar{f}_{n} =f_{n} ,\qquad \bar{g}_{n}^{(j)}=\operatorname{Pf} (d_{0},\bar{\beta}_{j},a_{1},\dots ,a_{2N},b_{1},\dots ,b_{2N})_{n} , \\
 \bar{h}_{n}^{(l)}=\operatorname{Pf} \big(\bar{c}_{1}^{(l)},\dots ,\bar{c}_{2N}^{(l)},b_{1},\dots ,b_{2N}\big) =\operatorname{Pf} \big(d_{0},{\overline{d_{0}'}}^{(l)},\dots \big)_{n} .
\end{gather*}
Since
\begin{gather*}
f_{n+1} = \operatorname{Pf} (d_0,d_{1}, \dots)_n ,\qquad f_{n-1}=\operatorname{Pf} (d_{0},d_{-1},\dots )_{n} ,
\end{gather*}
we then have
\begin{gather*}
f_{n+1}=f_{n}+\sum_{j=1}^{2N}(-1)^{j-1}\operatorname{Pf} (d_{1},a_{j})\operatorname{Pf} (d_{0},\dots ,\hat{a}_{j},\dots )_{n} ,\\
f_{n-1} = f_{n}+ \sum_{j=1}^{2N} (-1)^{j-1} \operatorname{Pf} (d_{-1}, a_j)\operatorname{Pf} (d_0, \dots, \hat{a}_j, \dots)_n ,
\end{gather*}
Then we can show
\begin{gather*}
f_{n+1}f_{n-1}-f_{n}f_{n} =-\sum_{j<k}(-1)^{j+k}\left(p_{j}+\frac{1}{p_{j}}-p_{k}-\frac{1}{p_{k}}\right)\operatorname{Pf}(b_{j,}b_{k}) \\
\hphantom{f_{n+1}f_{n-1}-f_{n}f_{n} =}{} \times \operatorname{Pf} (d_{0},\dots ,\hat{b}_{j},\dots )\operatorname{Pf}
(d_{0},\dots ,\hat{b}_{k},\dots ) .
\end{gather*}
On the other hand, since
\begin{gather*}
h_{n}^{(l)} =\operatorname{Pf} \big(d_{0},\overline{d_{0}}^{(l)},\dots \big)_{n} , \qquad \bar{h}_{n}^{(l)} =\operatorname{Pf} \big(d_{0},{\overline{d_{0}'}}^{(l)},\dots \big)_{n} ,
\end{gather*}
we have
\begin{gather*}
h_{n}^{(l)}=f_{n}+\sum_{j=1}^{2N}(-1)^{j-1}\operatorname{Pf} (d_{0},a_{j})_{n}\left(
\frac{p_{j}-a_{l}}{1-a_{l}p_{j}}\right)\operatorname{Pf} (d_{0},\dots ,\hat{a}_{j},\dots )_{n} ,\\
\bar{h}_{n}^{(l)}=f_{n}+\sum_{j=1}^{2N}(-1)^{j-1}\operatorname{Pf}
(d_{0},a_{j})_{n}\left(\frac{1-a_{l}p_{j}}{p_{j}-a_{l}}\right)\operatorname{Pf} (d_{0},\dots ,\hat{a}_{j},\dots )_{n} .
\end{gather*}
Similarly, we can show
\begin{gather*}
\big|h_{n}^{(l)}\big|^{2}-f_{n}f_{n} =-\sum_{j<k}(-1)^{j+k}\left( \frac{p_{j}-a_{l}}{1-a_{l}p_{j}}+\frac{1-a_{l}p_{j}}{p_{j}-a_{l}}-\frac{p_{k}-a_{l}}{1-a_{l}p_{k}}-\frac{1-a_{l}p_{k}}{p_{k}-a_{l}}\right) \\
\hphantom{\big|h_{n}^{(l)}\big|^{2}-f_{n}f_{n} =}{} \times \operatorname{Pf} (b_{j,}b_{k})\operatorname{Pf} (d_{0},\dots ,\hat{b}_{j},\dots )\operatorname{Pf} (d_{0},\dots ,\hat{b}_{k},\dots ) .
\end{gather*}
Finally, we have
\begin{gather*}
s(f_{n+1}f_{n-1}-f_{n}f_{n})-\sum_{l=1}^{M-m}\sigma _{l+m}|\rho_{l}|^{2}\big( |h_{n}^{(l)}|^{2}-f_{n}f_{n}\big) \\
\qquad{} =-\sum_{j<k}(-1)^{j+k}\left\{s\left(p_{j}+\frac{1}{p_{j}}-p_{k}-\frac{1}{p_{k}}\right) \right.\\
\left. \qquad\quad{}-\sum_{l=1}^{M-m}\sigma _{l+m}|\rho _{l}|^{2}\left( \frac{p_{j}-a_{l}}{1-a_{l}p_{j}}+\frac{1-a_{l}p_{j}}{p_{j}-a_{l}} -\frac{p_{k}-a_{l}}{1-a_{l}p_{k}}-\frac{1-a_{l}p_{k}}{p_{k}-a_{l}}\right)\right\} \\
\qquad\quad{} \times \operatorname{Pf} (b_{j,}b_{k})\operatorname{Pf} (d_{0},\dots ,\hat{b}_{j},\dots )\operatorname{Pf} (d_{0},\dots ,\hat{b}_{k},\dots ) \\
\qquad{} =\sum_{j=1}^{N}\sum_{k=N+1}^{2N}(-1)^{j+k}\sum_{l=1}^{m}\alpha_{j}^{(l)}\sigma _{l} \overline{\alpha _{k-N}^{(l)}} \operatorname{Pf} (d_{0},\dots ,\hat{b}_{j},\dots )\operatorname{Pf} (d_{0},\dots ,\hat{b}_{k},\dots ) \\
\qquad {} =\sum_{l=1}^{m}\sigma _{l}\sum_{j=1}^{N}\sum_{k=N+1}^{2N}(-1)^{j+k}\operatorname{Pf} (b_{j},\beta _{l})\operatorname{Pf} (b_{k},\bar{\beta }_{l})\operatorname{Pf} (d_{0},\dots ,\hat{b}_{j},\dots )\operatorname{Pf} (d_{0},\dots ,\hat{b}_{k},\dots ) \\
\qquad{} =\sum_{l=1}^{m}\sigma _{l}|g_{n}^{(l)}|^{2} .
\end{gather*}
The third bilinear equation is proved.
\end{proof}

The above Pfaf\/f\/ian solutions, with dependent variable transformations (\ref{dependentvar_trf}), give general $N$-bright-dark soliton solutions to the semi-discrete vector NLS equation (\ref{MCNLS_sd}).
\section[One- and two-soliton solutions for the two- and three-coupled discrete NLS equation]{One- and two-soliton solutions for the two-\\ and three-coupled discrete NLS equation}\label{section3}
\subsection{Two-component semi-discrete NLS equation}
In this subsection, we provide and illustrate one- and two-soliton for two-component semi-discrete NLS equation (\ref{sdCNLS}) explicitly.

{\bf{One-soliton solution}.} the tau-functions for one-soliton solution ($N=1$) are
\begin{gather*}
f_{n}=-1-c_{1\bar{1}}(p_{1}\bar{p}_{1})^{n}e^{\eta _{1}+\bar{\eta}_{1}} ,\\
g_{n}^{(1)}=-\alpha _{1}^{(1)}p_{1}{}^{n}e^{\eta _{1}} ,\qquad
h_{n}^{(1)}=-1-d_{1\bar{1}}^{(1)}(p_{1}\bar{p}_{1})^{n}e^{\eta _{1}+\bar{\eta}_{1}} ,
\end{gather*}
where
\begin{gather*}
c_{1\bar{1}}=\frac{\bar{\alpha}_{1}^{(1)}\sigma _{1}\alpha _{1}^{(1)}}{({p}_{1}\bar{p}_{1}-1)^{2}\left(\frac{s}{|{p}_{1}|^2}+\frac{{\sigma _{2}|\rho
_{1}|^{2}}(a_{1}-\bar{a}_{1})^{2}}{|1-a_{1}p_{1}|^{2}|1-\bar{a}_{1}p_{1}|^{2}}\right)} ,\qquad
d_{1\bar{1}}^{(1)}=-\frac{(p_{1}-a_{1})(\bar{p}_{1}-a_{1})}{(1-a_{1}p_{1})(1-a_{1}\bar{p}_{1})}c_{1\bar{1}}.
\end{gather*}

The above tau functions lead to the one-soliton solution as follows
\begin{gather*}
q_{n}^{(1)}=\frac{\alpha _{1}^{(1)}}{2\sqrt{c_{1\bar{1}}}}e^{\mathrm{i}\xi_{1I}}\operatorname{sech}( \xi _{1R}+\theta _{0}) ,\\
q_{n}^{(2)}=\frac{1}{2}\rho _{1}e^{\mathrm{i}\zeta _{1}}\big( 1+e^{2\mathrm{i}\phi _{1}}+(e^{2\mathrm{i}\phi _{1}}-1)\tanh ( \xi _{1R}+\theta _{0}) \big) ,
\end{gather*}
where $\xi _{1}=\xi _{1R}+\mathrm{i}\xi _{1I}=n\ln (\mathrm{i}p_{1})+s\big(p_{1}-p_{1}^{-1}\big)t$, $\zeta _{1}=n\varphi _{1}+n\pi /2+\omega _{1}t$, $e^{\mathrm{i}\varphi _{1}}=a_{1}$, $e^{2\theta _{0}}=c_{1\bar{1}}$, $e^{2\mathrm{i}\phi _{1}}= -(p_{1}-a_{1})(\bar{p}_{1}-a_{1})/((1-a_{1}p_{1})(1-a_{1}\bar{p}_{1}))$. Therefore, the amplitude of bright soliton for $q^{(1)}$ are $\frac{1}{2}\big|\alpha _{1}^{(1)}\big|/\sqrt{a_{1\bar{1}}}$. The dark soliton $q^{(2)}$ approaches $|\rho _{1}|$ as $x\rightarrow \pm \infty $. In addition, the intensity of the dark soliton is $|\rho _{1}|\cos \phi _{1}$.

An example of one-bright-dark soliton is illustrated in Fig.~\ref{fig1}a $p_1= 1.0+0.8 \mathrm{i}$, $\rho_1=5.0$, $\alpha _{1}^{(1)}=1.0+\mathrm{i}$, $a_1=0.8+0.6\mathrm{i}$. for focusing-focusing case ($\sigma_1=1.0$, $\sigma_2=1.0$). Fig.~\ref{fig1}b shows the bright-dark soliton for the focusing-defocusing case ($\sigma_1=1.0$, $\sigma_2=-1.0$). It is interesting to note the dark soliton corresponding to defocusing component becomes an anti-dark one.

\begin{figure}[t]\centering
\includegraphics[scale=0.37]{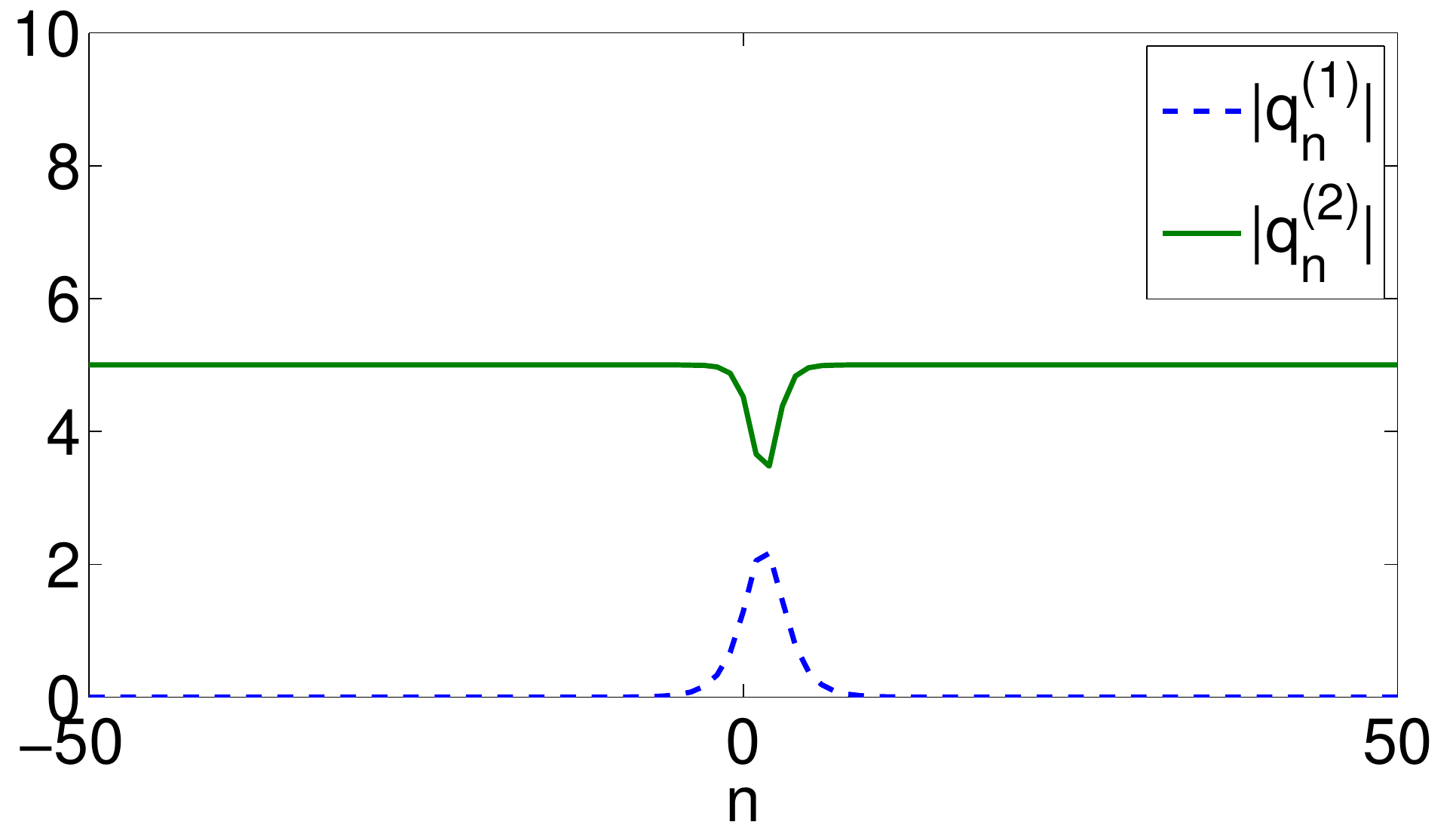}\qquad \includegraphics[scale=0.37]{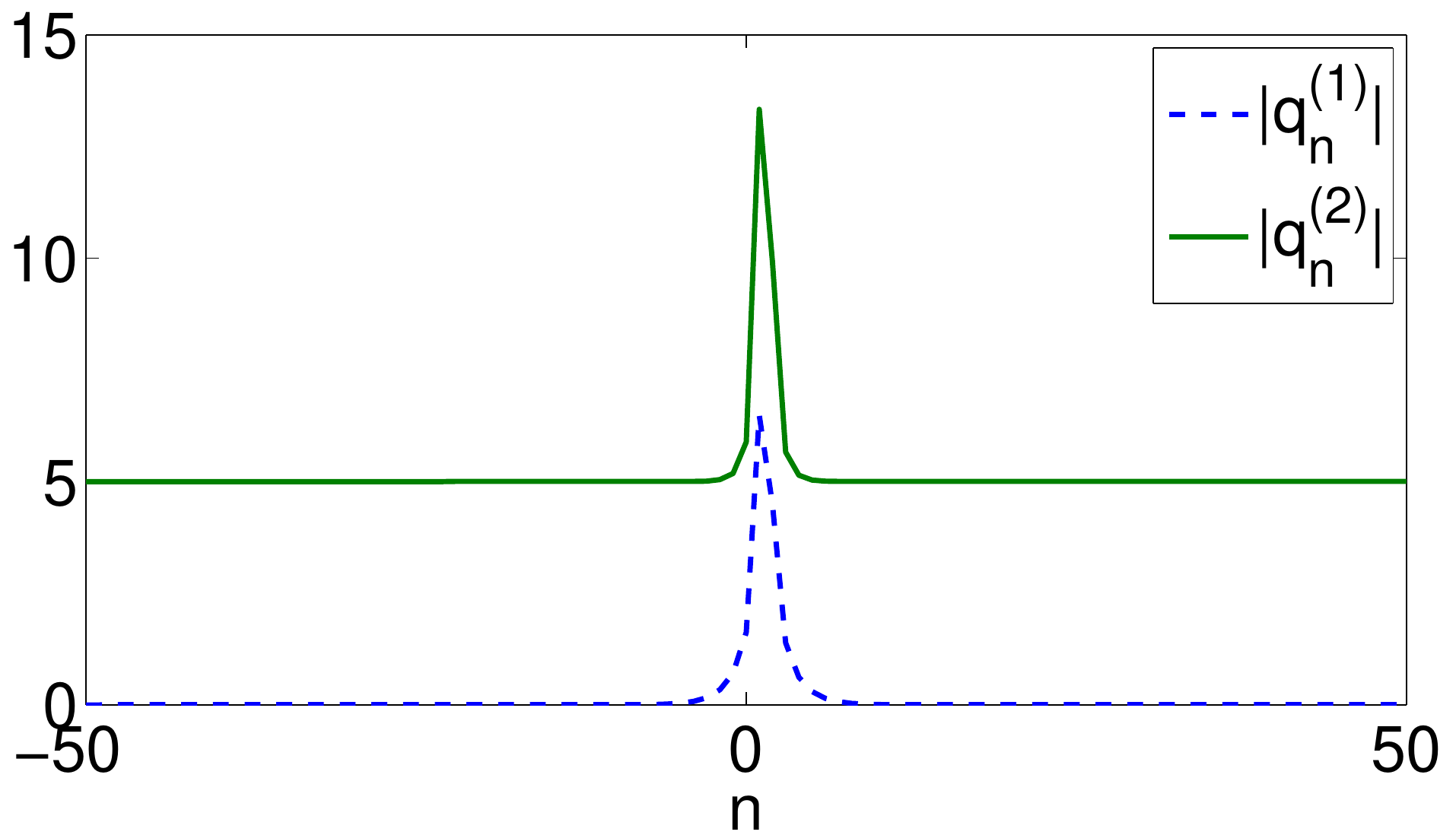}
\caption{One-bright-dark soliton soliton solution to a two-coupled semi-discrete NLS equation: (a)~focusing-focusing case ($\sigma_1=1.0$, $\sigma_2=1.0$); (b)~focusing-defocusing case ($\sigma_1=1.0$, $\sigma_2=-1.0$).}\label{fig1}
\end{figure}

{\bf{Two-soliton solution}.} The tau functions for two-soliton are of the following form
\begin{gather*}
f_{n}=1+c_{1\bar{1}}E_{1}\bar{E}_{1}+c_{2\bar{1}}E_{2}\bar{E}_{1}+c_{1\bar{2}}E_{1}\bar{E}_{2}+c_{2\bar{2}}E_{2}\bar{E}_{2}+c_{12\bar{1}\bar{2}}E_{1}E_{2}
\bar{E}_{1}\bar{E}_{2} , \\ 
g_{n}=\alpha _{1}^{(1)}E_{1}+\alpha _{2}^{(1)}E_{2}+c_{12\bar{1}}^{(j)}E_{1}E_{2}\bar{E}_{1}+c_{12\bar{2}}^{(j)}E_{1}E_{2}\bar{E}_{2} ,\\
h_{n}=1+d_{1\bar{1}}^{(1)}E_{1}\bar{E}_{1}+d_{2\bar{1}}^{(1)}E_{2}\bar{E}_{1}+d_{1\bar{2}}^{(1)} E_{1}\bar{E}_{2}+d_{2\bar{2}}^{(1)}E_{2}\bar{E}_{2}+d_{12\bar{1}\bar{2}}^{(1)}E_{1}E_{2}\bar{E}_{1}\bar{E}_{2} , 
\end{gather*}
where $E_{j}=p_{j}{}^{n}e^{\eta _{j}}$,
\begin{gather*}
c_{i\bar{j}}=\frac{\bar{\alpha}_{i}^{(1)}\sigma _{1}\alpha _{j}^{(1)}}{({p}_{i}\bar{p}_{j}-1)^{2}\left(\frac{s}{{p}_{i}\bar{p}_{j}}+\frac{{\sigma_{2}
|\rho_{1}|^{2}}(a_{1}-\bar{a}_{1})^{2}}{|1-a_{1}p_{i}|^{2}|1-\bar{a}_{1}p_{j}|^{2}}\right)} ,\qquad
d_{i\bar{j}}^{(1)}=-\frac{(p_{i}-a_{1})(\bar{p}_{j}-a_{1})}{(1-a_{1}p_{i})(1-a_{1}\bar{p}_{j})}c_{i\bar{j}} ,\\
c_{12\bar{1}\bar{2}}=|p_{2}-p_{1}|^{2}\left( \frac{c_{1\bar{1}}c_{2\bar{2}}}{(p_{1}+\bar{p}_{2})\left( p_{2}+\bar{p}_{1}\right) }-\frac{c_{1\bar{2}}c_{2\bar{1}}}{(p_{1}+\bar{p}_{1}) ( p_{2}+\bar{p}_{2} ) }\right) ,\\
c_{12\bar{j}}= ( p_{2}-p_{1} ) \left( \frac{\alpha _{2}^{(1)}c_{1\bar{j}}}{p_{2}+\bar{p}_{j}}-\frac{\alpha _{1}^{(1)}c_{2\bar{j}}}{p_{1}+\bar{p}_{j}}\right),\\
d_{12\bar{1}\bar{2}} =\frac{(p_{1}-a_{1})(\bar{p}_{1}-a_{1})(p_{2}-a_{2})(\bar{p}_{1}-a_{2})}{(1-a_{1}p_{1}) (1-a_{1}\bar{p}_{1})(1-a_{1}p_{2})(1-a_{1}\bar{p}_{2})}c_{12\bar{1}\bar{2}}.
\end{gather*}
A two bright-dark soliton solution is shown in Fig.~\ref{fig2} before and after the collision for parameters $\sigma_1=1.0$, $\sigma_2=-1$, $p_1= 1.0+0.5 \mathrm{i}$, $\rho=2.0$, $\alpha _{1}^{(1)}=1.0+0.8\mathrm{i}$, $\alpha _{2}^{(1)}=1.0+0.5\mathrm{i}$, $a_1=0.6+0.8\mathrm{i}$. It can be seen that the collision is elastic which is the same as for the continuous two-component NLS equation.

\begin{figure}[t]\centering
\includegraphics[scale=0.37]{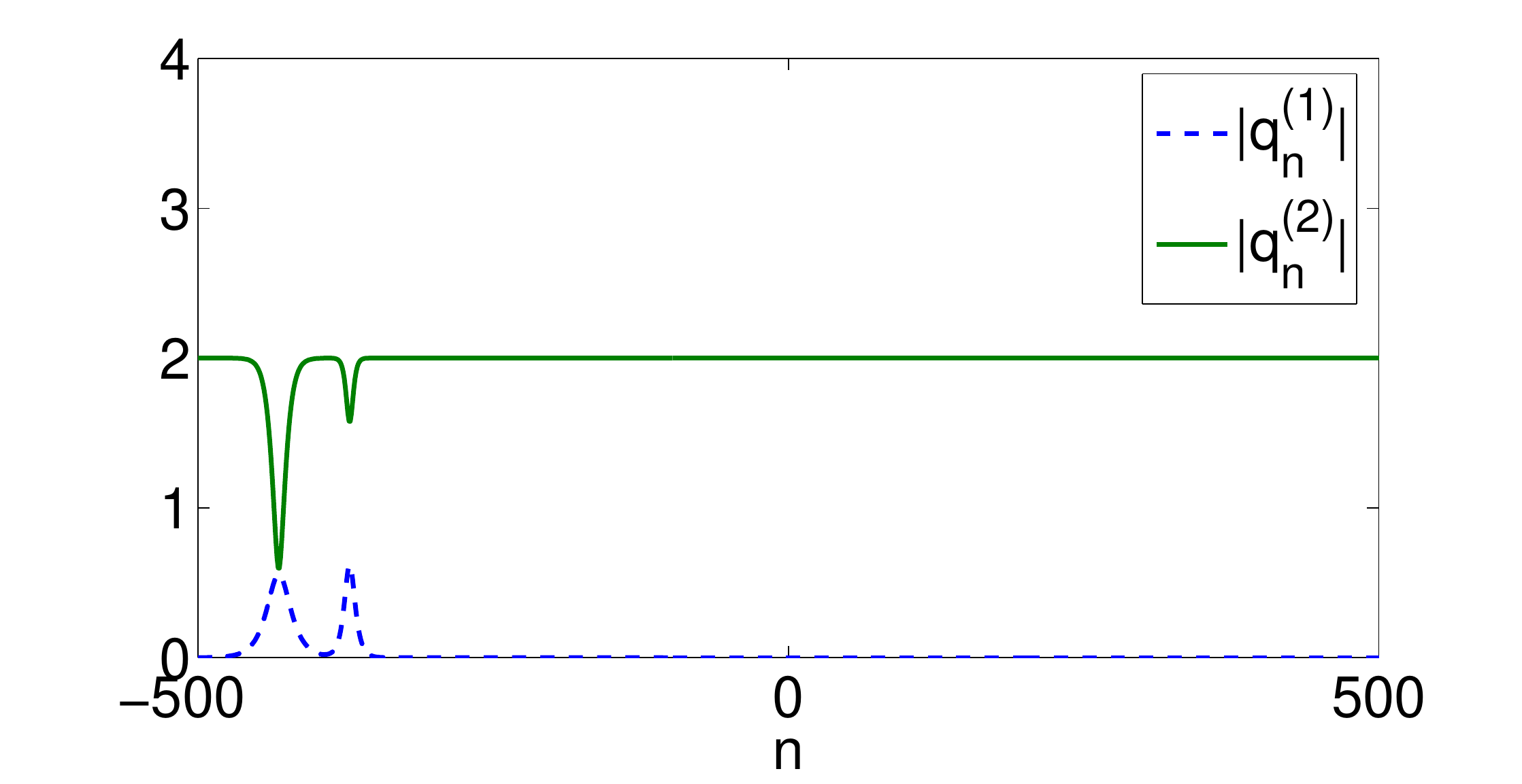}\qquad\includegraphics[scale=0.37]{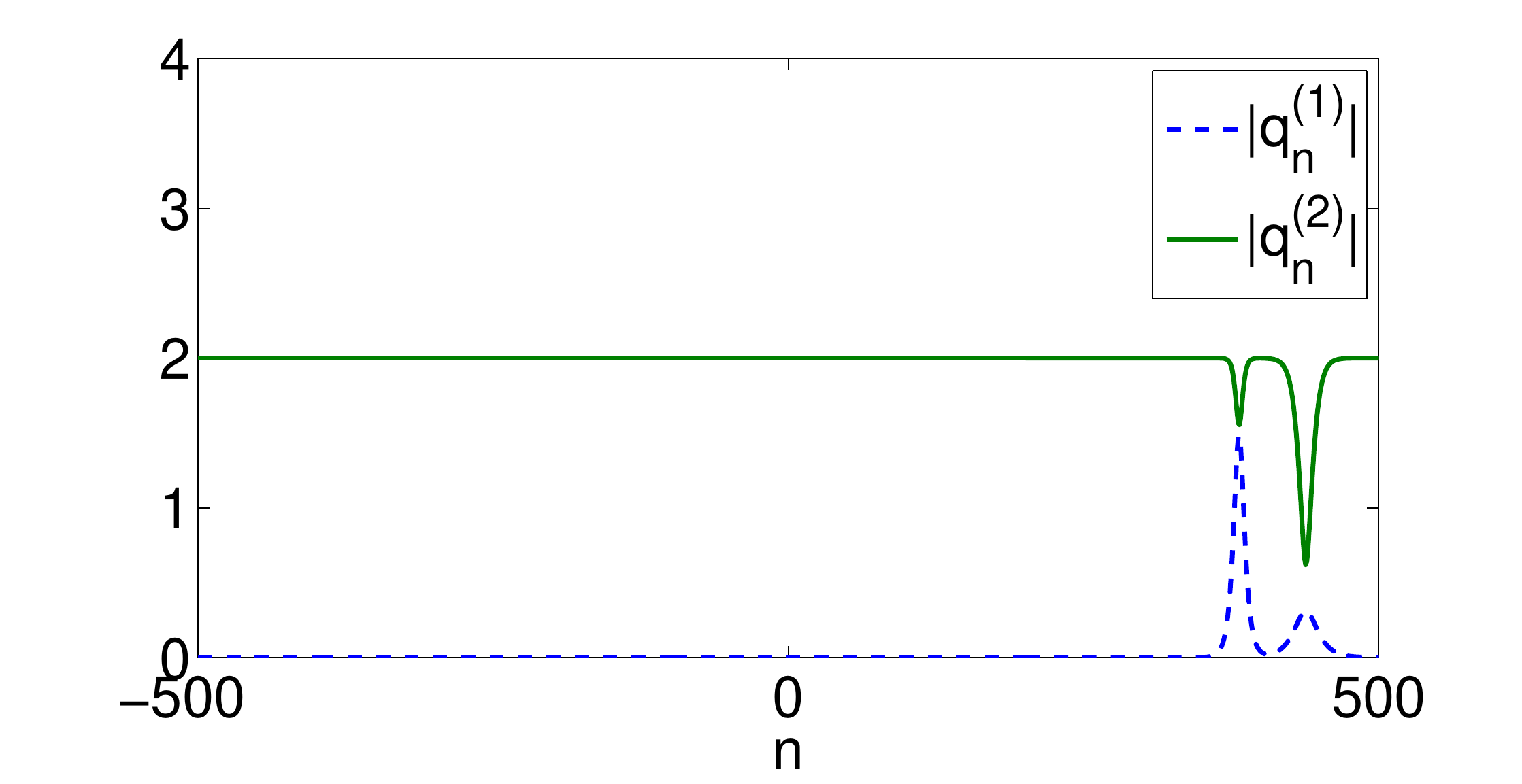}

\caption{A two-soliton solution of mixed type for a two-coupled semi-discrete NLS equation: (a)~before the collision $t=-80$; (b)~after the collision $t=80$.}
\label{fig2}
\end{figure}

\subsection[Bright-dark soliton solution for three-component semi-discrete NLS equation]{Bright-dark soliton solution for three-component\\ semi-discrete NLS equation}

In this subsection, we will give the mixed-type soliton solution to the following three-component semi-discrete NLS equation
\begin{gather*}
\mathrm{i}q_{n,t}^{(1)}=\left(1+\sum_{k=1}^{3}\sigma_{k}\big|q_{n}^{(k)}\big|^{2}\right)\big(q_{n+1}^{(1)}+q_{n-1}^{(1)}\big) , \\
\mathrm{i}q_{n,t}^{(2)}=\left(1+\sum_{k=1}^{3}\sigma_{k}\big|q_{n}^{(k)}\big|^{2}\right)\big(q_{n+1}^{(2)}+q_{n-1}^{(2)}\big) , \\
\mathrm{i}q_{n,t}^{(3)}=\left(1+\sum_{k=1}^{3}\sigma_{k}\big|q_{n}^{(k)}\big|^{2}\right)\left(q_{n+1}^{(3)}+q_{n-1}^{(3)}\right) .
\end{gather*}

{\bf{Two-bright-one-dark soliton solution}.} The tau-functions for one-soliton solution \linebreak ($N=1$) are
\begin{gather*}
f_{n}=-1-c_{1\bar{1}}(p_{1}\bar{p}_{1})^{n}e^{\eta _{1}+\bar{\eta}_{1}} ,\qquad g_{n}^{(j)}=-\alpha _{1}^{(j)}p_{1}{}^{n}e^{\eta _{1}} ,
\qquad h_{n}^{(1)}=-1-d_{1\bar{1}}^{(1)}(p_{1}\bar{p}_{1})^{n}e^{\eta _{1}+\bar{\eta}_{1}} ,
\end{gather*}
where
\begin{gather*}
c_{1\bar{1}}=\frac{\sum\limits_{k=1}^{2}\bar{\alpha}_{1}^{(k)}\sigma _{k}\alpha
_{1}^{(k)}}{({p}_{1}\bar{p}_{1}-1)^{2}\left(\frac{s}{|{p}_{1}|^2}+\frac{{\sigma _{3}|\rho
_{1}|^{2}}(a_{1}-\bar{a}_{1})^{2}}{|1-a_{1}p_{1}|^{2}|1-\bar{a}_{1}p_{1}|^{2}}\right) } ,\qquad
d_{1\bar{1}}^{(1)}=-\frac{(p_{1}-a_{1})(\bar{p}_{1}-a_{1})}{(1-a_{1}p_{1})(1-a_{1}\bar{p}_{1})}c_{1\bar{1}} ,
\end{gather*}
where $\omega =s(a_{1}-\bar{a}_{1})$, $|a_{1}|=1$, $s=1+\sigma _{3}|\rho _{1}|^{2}$ with $\bar{a}_{1}$ representing the complex conjugate of~$a_{1}$. The above tau functions lead to the one-soliton as follows
\begin{gather*}
q_{n}^{(j)}=\frac{\alpha _{1}^{(j)}}{2\sqrt{c_{1\bar{1}}}}e^{\mathrm{i}\xi_{1I}}\operatorname{sech} ( \xi _{1R}+\theta _{0} ) ,\qquad j=1,2\\
q_{n}^{(3)}=\frac{1}{2}\rho _{1}e^{\mathrm{i}\zeta _{1}}\big( 1+e^{2\mathrm{i}\phi _{1}}+(e^{2\mathrm{i}\phi _{1}}-1)\tanh ( \xi _{1R}+\theta _{0}) \big) ,
\end{gather*}
where $\xi _{1}=\xi _{1R}+\mathrm{i}\xi _{1I}=n\ln (\mathrm{i}p_{1})+s\big(p_{1}-p_{1}^{-1}\big)t$, $\zeta _{1}=n\varphi _{1}+n\pi /2+\omega _{1}t$, $e^{\mathrm{i}\varphi _{1}}=a_{1}$, $e^{2\theta _{0}}=c_{1\bar{1}}$, $e^{2\mathrm{i}\phi _{1}}= -(p_{1}-a_{1})(\bar{p}_{1}-a_{1})/ ((1-a_{1}p_{1})(1-a_{1}\bar{p}_{1}))$. Therefore, the amplitude of bright soliton for $q^{(j)}$ are $\frac{1}{2}\big|\alpha _{1}^{(j)}\big|/\sqrt{a_{1\bar{1}}}$. The dark soliton $q^{(3)}$ approaches $|\rho _{1}|$ as $x\rightarrow \pm \infty $. In addition, the intensity of the dark soliton is $|\rho _{1}|\cos \phi _{1}$.

The tau functions for two-soliton are of the following form
\begin{gather*}
f_{n}=1+c_{1\bar{1}}E_{1}\bar{E}_{1}+c_{2\bar{1}}E_{2}\bar{E}_{1}+c_{1\bar{2}}E_{1}\bar{E}_{2}+c_{2\bar{2}}E_{2}\bar{E}_{2}+c_{12\bar{1}\bar{2}}E_{1}E_{2}
\bar{E}_{1}\bar{E}_{2} , \\
g_{n}^{(j)}=\alpha _{1}^{(j)}E_{1}+\alpha _{2}^{(j)}E_{2}+c_{12\bar{1}}^{(j)}E_{1}E_{2}\bar{E}_{1}+c_{12\bar{2}}^{(j)}E_{1}E_{2}\bar{E}_{2} ,\qquad j=1,2, \\
 h_{n}^{(1)}=1+d_{1\bar{1}}^{(1)}E_{1}\bar{E}_{1}+d_{2\bar{1}}^{(1)}E_{2}\bar{E}_{1}+d_{1\bar{2}}^{(1)}E_{1}\bar{E}_{2}+d_{2\bar{2}}^{(1)}E_{2}\bar{E}
_{2}+d_{12\bar{1}\bar{2}}^{(1)}E_{1}E_{2}\bar{E}_{1}\bar{E}_{2} ,
\end{gather*}
where
\begin{gather*}
c_{i\bar{j}}=\frac{\sum\limits_{k=1}^{2}\bar{\alpha}_{i}^{(k)}\sigma _{k}\alpha
_{j}^{(k)}}{({p}_{i}\bar{p}_{j}-1)^{2}\left(\frac{s}{{p}_{i}\bar{p}_{j}}+\frac{{\sigma _{3}|\rho
_{1}|^{2}}(a_{1}-\bar{a}_{1})^{2}}{|1-a_{1}p_{i}|^{2}|1-\bar{a}_{1}p_{j}|^{2}}\right)} ,\qquad
d_{i\bar{j}}^{(1)}=-\frac{(p_{i}-a_{1})(\bar{p}_{j}-a_{1})}{(1-a_{1}p_{i})(1-a_{1}\bar{p}_{j})}c_{i\bar{j}} ,\\
c_{12\bar{1}\bar{2}}=|p_{2}-p_{1}|^{2}\left( \frac{c_{1\bar{1}}c_{2\bar{2}}}{(p_{1}+\bar{p}_{2}) ( p_{2}+\bar{p}_{1} ) }-\frac{c_{1\bar{2}}c_{2\bar{1}}}{(p_{1}+\bar{p}_{1}) ( p_{2}+\bar{p}_{2} ) }\right) ,\\
c_{12\bar{j}}= ( p_{2}-p_{1} ) \left( \frac{\alpha _{2}^{(1)}c_{1\bar{j}}}{p_{2}+\bar{p}_{j}}-\frac{\alpha _{1}^{(1)}c_{2\bar{j}}}{p_{1}+\bar{p}_{j}}\right) , \\
d_{12\bar{1}\bar{2}}^{(1)} =\frac{(p_{1}-a_{1})(\bar{p}_{1}-a_{1})(p_{2}-a_{2})(\bar{p}_{1}-a_{2})}{(1-a_{1}p_{1}) (1-a_{1}\bar{p}_{1})(1-a_{1}p_{2})(1-a_{1}\bar{p}_{2})}c_{12\bar{1}\bar{2}}.
\end{gather*}

It is found that two-bright-one-dark soliton solution given above is nonsingular for any other combinations of nonlinearities if the following quantity
\begin{gather*}
\left( \sum_{k=1}^{2}\bar{\alpha}_{i}^{(k)}\sigma _{k}\alpha_{j}^{(k)}\right) \left(\frac{s}{|{p}_{i}|^2}+\frac{{\sigma _{3}|\rho_{1}|^{2}}(a_{1}-\bar{a}_{1})^{2}}{|1-a_{1}p_{i}|^{2}|1-\bar{a}_{1}p_{i}|^{2}}\right) 
\end{gather*}
is positive.

{\bf{One-bright--two-dark soliton solution}.} The tau-functions for one-soliton solution \linebreak \smash{($N=1$)} are
\begin{gather*}
f_{n}=-1-c_{1\bar{1}}(p_{1}\bar{p}_{1})^{n}e^{\eta _{1}+\bar{\eta}_{1}} ,\qquad
g_{n}^{(1)}=-\alpha _{1}^{(1)}p_{1}{}^{n}e^{\eta _{1}},\qquad
h_{n}^{(l)}=-1-d_{1\bar{1}}^{(l)}(p_{1}\bar{p}_{1})^{n}e^{\eta _{1}+\bar{\eta}_{1}} ,
\end{gather*}
where
\begin{gather*}
c_{1\bar{1}}=\frac{\bar{\alpha}_{1}^{(1)}\sigma _{1}\alpha _{1}^{(1)}}{({p}_{1}\bar{p}_{1}-1)^{2}\left(\frac{s}{|p_1|^2}+\frac{\sum\limits_{l=1}^{2}{\sigma _{l+1}|\rho _{l}|^{2}}(a_{l}-\bar{a}_{l})^{2}}{|1-a_{1}p_{1}|^{2}|1-\bar{a}_{1}p_{1}|^{2}}\right)} .
\end{gather*}
Here $\omega _{l}=s(a_{l}-\bar{a}_{l})$, $|a_{l}|=1$, $s=1+\sigma _{2}|\rho _{1}|^{2}+\sigma _{3}|\rho _{2}|^{2}$, $\bar{a}_{l}$ represents the complex conjugate of~$a_{l}$. The above tau functions lead to the one-soliton as follows
\begin{gather*}
q_{n}^{(1)}=\frac{\alpha _{1}^{(1)}}{2}\sqrt{c_{1\bar{1}}}e^{\mathrm{i}\xi_{1I}}\operatorname{sech} ( \xi _{1R}+\theta _{0}) , \\
q_{n}^{(l+1)}=\frac{1}{2}\rho _{l}e^{\mathrm{i}\zeta _{l}}\big( 1+e^{2\mathrm{i}\phi _{l}}+(e^{2\mathrm{i}\phi _{l}}-1)\tanh ( \xi
_{1R}+\theta _{0} ) \big) , \qquad l=1,2,
\end{gather*}
where $\xi _{1}=\xi _{1R}+\mathrm{i}\xi _{1I}=n\ln (\mathrm{i}p_{1})+s\big(p_{1}-p_{1}^{-1}\big)t$, $\zeta _{l}=n\varphi _{l}+n\pi /2+\omega _{l}t$, $e^{\mathrm{i}\varphi _{l}}=a_{l}$, $e^{2\theta _{0}}=c_{1\bar{1}}$, $e^{2\mathrm{i}\phi _{l}} =-(p_{1}-a_{l}) (\bar{p}_{1}-a_{l})/((1-a_{l}p_{1})(1-a_{l}\bar{p}_{1}))$. Therefore, the amplitude of bright soliton for~$q^{(1)}$ are $\frac{1}{2}\big|\alpha _{1}^{(1)}\big|/\sqrt{a_{1\bar{1}}}$. The dark soliton $q^{(l+1)}$ approaches $|\rho _{l}|$ as $x\rightarrow \pm \infty $. In addition, the intensity of the dark soliton is $|\rho _{l}|\cos \phi _{l}$.

The tau functions for two-soliton are of the following form
\begin{gather*}
f_{n}=1+c_{1\bar{1}}E_{1}\bar{E}_{1}+c_{2\bar{1}}E_{2}\bar{E}_{1}+c_{1\bar{2}}E_{1}\bar{E}_{2}+c_{2\bar{2}}E_{2}\bar{E}_{2}+c_{12\bar{1}\bar{2}}E_{1}E_{2}
\bar{E}_{1}\bar{E}_{2} , \\
g_{n}^{(1)}=\alpha _{1}^{(1)}E_{1}+\alpha _{2}^{(1)}E_{2}+c_{12\bar{1}}^{(1)}E_{1}E_{2}\bar{E}_{1}+c_{12\bar{2}}^{(1)}E_{1}E_{2}\bar{E}_{2} , \\
h_{n}^{(l)}=1+d_{1\bar{1}}^{(l)}E_{1}\bar{E}_{1}+d_{2\bar{1}}^{(l)}E_{2}\bar{E}_{1}+d_{1\bar{2}}^{(l)}E_{1}\bar{E}_{2}+d_{2\bar{2}}^{(l)}E_{2}\bar{E}
_{2}+d_{12\bar{1}\bar{2}}^{(l)}E_{1}E_{2}\bar{E}_{1}\bar{E}_{2} , \qquad l=1,2 ,
\end{gather*}
where
\begin{gather*}
c_{i\bar{j}}=\frac{\bar{\alpha}_{i}^{(1)}\sigma _{1}\alpha _{j}^{(1)}}{({p}_{i}\bar{p}_{j}-1)^{2}\left(\frac{s}{{p}_{i}\bar{p}_{j}}+ \frac{\sum\limits_{l=1}^{2}{\sigma }_{l+1}{|\rho _{l}|^{2}}(a_{l}-\bar{a}_{l})^{2}}
{|1-a_{l}p_{i}|^{2}|1-\bar{a}_{l}p_{j}|^{2}}\right)} ,\qquad
d_{i\bar{j}}^{(l)}=-\frac{(p_{i}-a_{l})(\bar{p}_{j}-a_{l})}{(1-a_{l}p_{i})(1-a_{l}\bar{p}_{j})}c_{i\bar{j}} ,\\
c_{12\bar{1}\bar{2}}=|p_{2}-p_{1}|^{2} \left( \frac{c_{1\bar{1}}c_{2\bar{2}}}{(p_{1}+\bar{p}_{2})( p_{2}+\bar{p}_{1}) } -\frac{c_{1\bar{2}}c_{2\bar{1}}}{(p_{1}+\bar{p}_{1})( p_{2}+\bar{p}_{2}) }\right) ,\\
c_{12\bar{j}} =( p_{2}-p_{1}) \left( \frac{\alpha _{2}^{(1)}c_{1\bar{j}}}{p_{2}+\bar{p}_{j}}-\frac{\alpha _{1}^{(1)}c_{2\bar{j}}}{p_{1}+\bar{p}_{j}}\right) , \\
d_{12\bar{1}\bar{2}}^{(l)} =\frac{(p_{1}-a_{l})(\bar{p}_{1}-a_{l})(p_{2}-a_{l})(\bar{p}_{1}-a_{l})}{(1-a_{l}p_{1}) (1-a_{l}\bar{p}_{1})(1-a_{l}p_{2})(1-a_{l}\bar{p}_{2})}c_{12\bar{1}\bar{2}}.
\end{gather*}

For all possible combinations of mixed type in three-coupled NLS equation, one-bright-two-dark soliton solution exists if the following quantity
\begin{gather*}
\big(\bar{\alpha}_{i}^{(1)}\sigma _{1}\alpha _{i}^{(1)}\big) \left(\frac{s}{|{p}_{i}|^{2}}+\frac{\sum\limits_{l=1}^{2}{\sigma }_{l+1}{|\rho _{l}|^{2}}(a_{l}-\bar{a}_{l})^{2}}{|1-a_{l}p_{i}|^{2}|1-\bar{a}_{l}p_{i}|^{2}}\right) 
\end{gather*}
is positive.

The asymptotic analysis for two-soliton solution is performed in Appendix~\ref{appendixA}. It should be pointed out that two-soliton for one-bright-two-dark soliton case always undertakes elastic collision without shape changing.

\section{Discussion and conclusion}\label{section4}
{We conclude the present paper by two comments. First, we comment on a connection of the vector semi-discrete NLS equation to the vector modif\/ied Volterra lattice equation studied in~\cite{Adler08}. To this end, we consider the two-component semi-discrete NLS equation of focusing type
\begin{gather}
 \mathrm{i} \frac{{\rm d}}{{\rm d}t}u_n=\big(1+|u_n|^2+|v_n|^2\big) (u_{n+1}+u_{n-1} ),\nonumber \\
\mathrm{i} \frac{{\rm d}}{{\rm d}t}v_n=\big(1+|u_n|^2+|v_n|^2\big)(v_{n+1}+v_{n-1}) .\label{2NLS}
\end{gather}
Let $u_n=x_n+\mathrm{i} y_n$ and $v_n=z_n+\mathrm{i} w_n$, then equation~(\ref{2NLS}) becomes
\begin{gather*} 
\frac{{\rm d}}{{\rm d}t}x_n=\big(1+x_n^2+y_n^2+z_n^2+w_n^2\big)(y_{n+1}+y_{n-1}), \\
-\frac{{\rm d}}{{\rm d}t}y_n=\big(1+x_n^2+y_n^2+z_n^2+w_n^2\big)(x_{n+1}+x_{n-1}), \\
\frac{{\rm d}}{{\rm d}t}z_n=\big(1+x_n^2+y_n^2+z_n^2+w_n^2\big)(w_{n+1}+w_{n-1}), \\
-\frac{{\rm d}}{{\rm d}t}w_n=\big(1+x_n^2+y_n^2+z_n^2+w_n^2\big)(z_{n+1}+z_{n-1}) .
\end{gather*}
By def\/ining
\begin{gather*}
 U_n^{(1)}=\begin{cases} x_n, & \text{for $n$ even},\\
 y_n, & \text{for $n$ odd}, \end{cases} \qquad
 U_n^{(2)}=\begin{cases} -y_n, &\text{for $n$ even},\\
 x_n, & \text{for $n$ odd},\end{cases} \\
 U_n^{(3)}=\begin{cases} z_n, &\text{for $n$ even},\\ w_n, &\text{for $n$ odd},\end{cases}\qquad
 U_n^{(4)}=\begin{cases} -w_n, & \text{for $n$ even},\\ z_n, & \text{for $n$ odd},\end{cases} \\
 U_n^{(5)}=\begin{cases} (-1)^{n/2}, &\text{for $n$ even},\\ (-1)^{(n-1)/2}, & \text{for $n$ odd} ,\end{cases}
\end{gather*}
we obtain
\begin{gather*} \frac{{\rm d}}{{\rm d}t}U_n^{(1)}
=\big(\big(U_n^{(1)}\big)^2+\big(U_n^{(2)}\big)^2+\big(U_n^{(3)}\big)^2+\big(U_n^{(4)}\big)^2+\big(U_n^{(5)}\big)^2\big)
\big(U_{n+1}^{(1)}+U_{n-1}^{(1)}\big), \\
 \frac{{\rm d}}{{\rm d}t}U_n^{(2)}
=\big(\big(U_n^{(1)}\big)^2+\big(U_n^{(2)}\big)^2+\big(U_n^{(3)}\big)^2+\big(U_n^{(4)}\big)^2+\big(U_n^{(5)}\big)^2\big)
\big(U_{n+1}^{(2)}+U_{n-1}^{(2)}\big), \\
 \frac{{\rm d}}{{\rm d}t}U_n^{(3)}
=\big(\big(U_n^{(1)}\big)^2+\big(U_n^{(2)}\big)^2+\big(U_n^{(3)}\big)^2+\big(U_n^{(4)}\big)^2+\big(U_n^{(5)}\big)^2\big)
\big(U_{n+1}^{(3)}+U_{n-1}^{(3)}\big), \\
 \frac{{\rm d}}{{\rm d}t}U_n^{(4)}
=\big(\big(U_n^{(1)}\big)^2+\big(U_n^{(2)}\big)^2+\big(U_n^{(3)}\big)^2+\big(U_n^{(4)}\big)^2+\big(U_n^{(5)}\big)^2\big)
\big(U_{n+1}^{(4)}+U_{n-1}^{(4)}\big), \\
 \frac{{\rm d}}{{\rm d}t}U_n^{(5)}
=\big(\big(U_n^{(1)}\big)^2+\big(U_n^{(2)}\big)^2+\big(U_n^{(3)}\big)^2+\big(U_n^{(4)}\big)^2+\big(U_n^{(5)}\big)^2\big)
\big(U_{n+1}^{(5)}+U_{n-1}^{(5)}\big)
\end{gather*}
for $n$ being even, and
\begin{gather*}
 \frac{{\rm d}}{{\rm d}t}U_n^{(1)}
=-\big(\big(U_n^{(1)}\big)^2+\big(U_n^{(2)}\big)^2+\big(U_n^{(3)}\big)^2+\big(U_n^{(4)}\big)^2+\big(U_n^{(5)}\big)^2\big)
\big(U_{n+1}^{(1)}+U_{n-1}^{(1)}\big), \\
 \frac{{\rm d}}{{\rm d}t}U_n^{(2)}
=-\big(\big(U_n^{(1)}\big)^2+\big(U_n^{(2)}\big)^2+\big(U_n^{(3)}\big)^2+\big(U_n^{(4)}\big)^2+\big(U_n^{(5)}\big)^2\big)
\big(U_{n+1}^{(2)}+U_{n-1}^{(2)}\big), \\
 \frac{{\rm d}}{{\rm d}t}U_n^{(3)}
=-\big(\big(U_n^{(1)}\big)^2+\big(U_n^{(2)}\big)^2+\big(U_n^{(3)}\big)^2+\big(U_n^{(4)}\big)^2+\big(U_n^{(5)}\big)^2\big)
\big(U_{n+1}^{(3)}+U_{n-1}^{(3)}\big), \\
 \frac{{\rm d}}{{\rm d}t}U_n^{(4)}
=-\big(\big(U_n^{(1)}\big)^2+\big(U_n^{(2)}\big)^2+\big(U_n^{(3)}\big)^2+\big(U_n^{(4)}\big)^2+\big(U_n^{(5)}\big)^2\big)
\big(U_{n+1}^{(4)}+U_{n-1}^{(4)}\big), \\
 \frac{{\rm d}}{{\rm d}t}U_n^{(5)}
=-\big(\big(U_n^{(1)}\big)^2+\big(U_n^{(2)}\big)^2+\big(U_n^{(3)}\big)^2+\big(U_n^{(4)}\big)^2+\big(U_n^{(5)}\big)^2\big)
\big(U_{n+1}^{(5)}+U_{n-1}^{(5)}\big)
\end{gather*}
for $n$ being odd, in other words,
\begin{gather*}
\frac{{\rm d}}{{\rm d}t}U_n^{(j)}=(-1)^n\left(\sum_{k=1}^5\big(U_n^{(k)}\big)^2\right)\big(U_{n+1}^{(j)}+U_{n-1}^{(j)}\big)
\end{gather*}
for $1\le j\le 5$. By rewriting $U_n^{(j)}=\mathrm{i}^nV_n^{(j)}$ and $t=x/\mathrm{i}$, we have
\begin{gather*} \frac{{\rm d}}{{\rm d}x}V_n^{(j)} =\left(\sum_{k=1}^5\big(V_n^{(k)}\big)^2\right)\big(V_{n+1}^{(j)}-V_{n-1}^{(j)}\big) .
\end{gather*}
Consequently,
\begin{gather*}
 u_n =
\begin{cases}
 U_n^{(1)}-\mathrm{i}U_n^{(2)}, & \text{$n$ even}, \\
U_n^{(2)}+\mathrm{i}U_n^{(1)}, & \text{$n$ odd}
 \end{cases} =
\begin{cases}
\mathrm{i}^{n-1}\big(V_n^{(2)}+\mathrm{i}V_n^{(1)}\big), & \text{$n$ even}, \\
 \mathrm{i}^n\big(V_n^{(2)}+\mathrm{i}V_n^{(1)}\big), & \text{$n$ odd},
 \end{cases}\\
v_n=
\begin{cases} U_n^{(3)}-\mathrm{i}U_n^{(4)}, & \text{$n$ even}, \\
U_n^{(4)}+\mathrm{i}U_n^{(3)}, & \text{$n$ odd}
 \end{cases} =
\begin{cases}
\mathrm{i}^{n-1}\big(V_n^{(4)}+\mathrm{i} V_n^{(3)}\big), & \text{$n$ even}, \\
 \mathrm{i}^n\big(V_n^{(4)}+\mathrm{i}V_n^{(3)}\big), & \text{$n$ odd}.
 \end{cases}
\end{gather*}
$u_n$ and $u_n^*$ correspond to $U_n^{(1)}$ and $U_n^{(2)}$, and $v_n$ and $v_n^*$ correspond to $U_n^{(3)}$ and $U_n^{(4)}$, with even-odd parity depending gauge factor $\mathrm{i}^{\text{($n$ or $n-1$)}}$.

The correspondence between coupled defocusing-defocusing and focusing-defocusing coupled Ablowitz--Ladik equation and the vector modif\/ied Volterra lattice equatino can be constructed by similar variable transformations. In all cases, we need 5-components in the vector modif\/ied Volterra lattice, one of which is a trivial wave $V_n^{(5)}=\mathrm{i}^{\text{($0$ or $1$)}}$, in order to recover 2-component coupled Ablowitz--Ladik equation. In principle any solutions of coupled Ablowitz--Ladik equation can be derived from those of vector modif\/ied Volterra lattice and vice versa. However it is not so easy to make exact matching.}

Second, we give a comparison between the NLS-type equations and the sine/sinh-Gordon-type equations since their belong to the simplest positive and negative f\/lows of the AKNS hierarchy, respectively. In a series papers by Barashenlov, Getmanov et al.~\cite{Getmanov87,Getmanov88,Getmanov93JMP1,Getmanov93JMP2}, a~generic integrable relativistic system associated with the ${\mathfrak{sl}}(2,{\mathbb C})$ was systematically investigated, from which the massive Thirring model, the complex sine-Gordon equation in Euclidean and Minkowski spaces etc.\ are produced by dif\/ferent reductions. Especially an ${\rm O}(1,1)$ sine-Gordon equation with the Lagrangian
\begin{gather*}
 L= \frac{u_{1\xi} u_{1\eta}-u_{2\xi} u_{2\eta}}{1-\big(u^2_1-u^2_2\big)}+\big(u^2_1-u^2_2-1\big) ,
\end{gather*}
exhibits nontrivial interaction of solitons such as decay and fusion \cite{Getmanov88}. Here $u_1$ and $u_2$ are real variables. In parallel to the ${\rm O}(1,1)$ sine-Gordon equation, we can have a NLS equation of~${\rm O}(1,1)$ type, whose Lagrangian is
\begin{gather*}
 L={u_{1} u_{2t}-u_{1t} u_{2}} + \big(u^2_{1x}-u^2_{2x}\big)- \sigma \big(u^2_1-u^2_2\big)^2 .
\end{gather*}

On the contrary, in parallel to the ${\rm U}(1,1)$ coupled NLS system, e.g., system~(\ref{CNLS}) with $\sigma=1$ and $\sigma=-1$, whose Lagrangian can be written by
\begin{gather*}
 L=\frac{\mathrm{i}}{2} \big(u_{t} u^{\ast}-u^{\ast} u_{t}-v_{t} v^{\ast}+v^{\ast} v_{t}\big) + \big(|u|^2_{x}-|v|^2_{x}\big)- \sigma \big(|u|^2-|v|^2\big)^2,
\end{gather*}
where $u$ and $v$ are complex variable and $^{\ast}$ represents complex conjugate, we could propose a~${\rm U}(1,1)$ coupled complex sine-Gordon system with Lagrangian
\begin{gather*}
 L= \frac{u_{\eta} u^{\ast}_{\xi}-v_{\eta} v^{\ast}_{\xi}}{1-\big(|u|^2-|v|^2\big)}-\big(|u|^2-|v|^2\big) .
\end{gather*}
Then several natural questions arise: does the NLS equation of ${\rm O}(1,1)$ type exhibit nontrivial interaction of solitons such as decay and fusion? Are there any nontrivial interaction of solitons for the ${\rm U}(1,1)$ coupled NLS system and the ${\rm U}(1,1)$ complex sine-Gordon system? Unfortunately, the answers to these questions are not clear at this moment. We expect above questions could be answered by the authors or others in the near future.

\appendix

\section{Appendix} \label{appendixA}
By taking $N=1$ we get the tau functions for one-soliton solution,
\begin{gather*}
f_{n}=\operatorname{Pf} (a_{1},a_{2},b_{1},b_{2})=-1-c_{1\bar{1}}E_{1}\bar{E}_{1},\\
g_{n}^{(j)}=\operatorname{Pf} (d_{0},\beta _{j},a_{1},a_{2},b_{1},b_{2})=-\alpha _{1}^{(1)}E_{1},\qquad j=1,\dots, m ,\\
h_{n}^{(l)}=\operatorname{Pf} \big(c_{1}^{(l)},c_{2}^{(l)},b_{1},b_{2}\big)=-1-d_{1\bar{1}}^{(l)}E_{1}\bar{E}_{1},\qquad l=1,\dots, M-m ,
\end{gather*}
where
\begin{gather*}
c_{1\bar{1}}=\frac{\sum\limits_{k=1}^{m}\bar{\alpha}_{1}^{(k)}\sigma _{k}\alpha
_{1}^{(k)}}{({p}_{1}\bar{p}_{1}-1)^{2}\left(\frac{s}{|p_1|^2}+\sum\limits_{l=1}^{M-m} \frac{|\rho _{l}|^{2}(a_{l}-\bar{a}%
_{l})^{2}}{|1-a_{l}p_{1}|^{2}|1-\bar{a}_{l}p_{1}|^{2}}\right)} ,\qquad
d_{1\bar{1}}^{(l)}=-\frac{(p_{1}-a_{l})(\bar{p}_{1}-a_{l})}{(1-a_{l}p_{1})(1-a_{l}\bar{p}_{1})}c_{1\bar{1}}.
\end{gather*}

Based on the $N$-soliton solution of the vector discrete NLS equation, the tau-functions for two-soliton solution can be expanded for $N=2$
\begin{gather*}
f=\operatorname{Pf} (a_{1},a_{2},a_{3},a_{4},b_{1},b_{2},b_{3},b_{4}) \\
\hphantom{f}{} =1+c_{1\bar{1}}E_{1}\bar{E}_{1}+c_{2\bar{1}}E_{2}\bar{E}_{1}+c_{1\bar{2}}E_{1} \bar{E}_{2}+c_{2\bar{2}}E_{2}\bar{E}_{2}+c_{12\bar{1}\bar{2}}E_{1}E_{2}\bar{E}_{1}\bar{E}_{2},
\\
 g_{n}^{(j)} =\operatorname{Pf} (d_{0},\beta_{j},a_{1},a_{2},a_{3},a_{4},b_{1},b_{2},b_{3},b_{4}) \\
\hphantom{g_{n}^{(j)}}{} =\alpha _{1}^{(j)}E_{1}+\alpha _{2}^{(j)}E_{2}+c_{12\bar{1}}^{(j)}E_{1}E_{2}\bar{E}_{1}+c_{12\bar{2}}^{(j)}E_{1}E_{2}\bar{E}_{2} , \qquad j=1,\dots, m ,\\
 h_{n}^{(l)} =\operatorname{Pf} \big(c_{1}^{(l)},c_{2}^{(l)},c_{3}^{(l)},c_{4}^{(l)},b_{1},b_{2},b_{3},b_{4}\big) =1+d_{1\bar{1}}^{(l)}E_{1}\bar{E}_{1} \\
\hphantom{h_{n}^{(l)}=}{}+d_{2\bar{1}}^{(l)}E_{2}\bar{E}_{1} +d_{1\bar{2}}^{(l)}E_{1}\bar{E}_{2}+d_{2\bar{2}}^{(l)}E_{2}\bar{E}_{2}+d_{12\bar{1}\bar{2}}^{(l)}E_{1}E_{2}\bar{E}_{1}\bar{E}_{2} , \qquad l=1,\dots, M-m ,
\end{gather*}
where
\begin{gather*}
c_{i\bar{j}}=\frac{\sum\limits_{k=1}^{m}\bar{\alpha}_{j}^{(k)}\sigma
_{k}\alpha _{i}^{(k)}}{({p}_{i}\bar{p}_{j}-1)^{2}\left(\frac{s}{{p}_{i}\bar{p}_{j}}+\sum\limits_{l=1}^{M-m} \frac{\rho _{l}|^{2}(a_{l}-\bar{a}
_{l})^{2}}{|1-a_{l}p_{i}|^{2}|1-\bar{a}_{l}p_{j}|^{2}}\right)} ,\qquad
d_{i\bar{j}}^{(l)}=-\frac{(p_{i}-a_{l})(\bar{p}_{j}-a_{l})}{(1-a_{l}p_{i})(1-a_{l}\bar{p}_{j})}c_{i\bar{j}} ,\\
c_{12\bar{1}\bar{2}}=|P_{12}|^{2}\big( P_{1\bar{1}}P_{2\bar{2}}c_{1\bar{2}}c_{2\bar{1}}-P_{1\bar{2}}P_{2\bar{1}}c_{1\bar{1}}c_{2\bar{2}}\big),\qquad
c_{12\bar{j}}^{(k)} =P_{12}\big( \alpha _{1}^{(k)}P_{1\bar{1}}c_{2\bar{j}}-\alpha _{2}^{(k)}P_{2\bar{1}}c_{1\bar{j}}\big) , \\
d_{12\bar{1}\bar{2}}^{(l)} =\frac{(p_{1}-a_{l})(\bar{p}_{1}-a_{l})(p_{2}-a_{l})(\bar{p}_{1}-a_{l})}{(1-a_{l}p_{1}) (1-a_{l}\bar{p}_{1})(1-a_{l}p_{2})(1-a_{l}\bar{p}_{2})}c_{12\bar{1}\bar{2}},\\
P_{ij}=\frac{p_{i}-p_{j}}{p_{i}p_{j}-1} ,\qquad P_{i\bar{j}}=\frac{p_{i}-\bar{p}_{j}}{p_{i}\bar{p}_{j}-1} .
\end{gather*}
Next, we investigate the asymptotic behavior of two-soliton solution. To this end, we assume $(\operatorname{Re}\ln (\mathrm{i}p_{2}))>\operatorname{Re} (\ln (\mathrm{i}p_{1}))>0$, $s\operatorname{Re}(p_{2}-p_{2}^{-1})/\operatorname{Re}(\ln (\mathrm{i}p_{2}))>s\operatorname{Re}(p_{1}-p_{1}^{-1})/\operatorname{Re}(\ln (\mathrm{i}p_{1}))$ without loss of generality. For the above choice of parameters, we have (i)~$\eta _{1R}\approx 0$, $\eta _{2R}\rightarrow \mp \infty $ as $t\rightarrow \mp \infty $ for soliton~1 and (ii)~$\eta _{2R}\approx 0$, $\eta _{1R}\rightarrow \pm \infty $ as $t\rightarrow \mp \infty $ for soliton~2. This leads to the following asymptotic forms for two-soliton solution.

(i) Before collision ($t\rightarrow -\infty $). Soliton 1 ($\eta _{1R}\approx 0$, $\eta _{2R}\rightarrow -\infty $):
\begin{gather*}
q_{n}^{(j)} \rightarrow \alpha _{1}^{(j)}\frac{\mathrm{i}^{n}E_{1}}{1+c_{1\bar{1}}E_{1}\bar{E}_{1}} \rightarrow A_{j}^{1-}e^{\mathrm{i}\xi _{1I}}\operatorname{sech} \big( \xi _{1R}+\xi_{0}^{1-}\big) ,\\
q_{n}^{(l+m)} \rightarrow \frac{1+d_{1\bar{1}}^{(l)}E_{1}\bar{E}_{1}}{1+c_{1\bar{1}}E_{1}\bar{E}_{1}}\rho _{l} (\mathrm{i}a_{l})^{n}e^{\omega _{l}t} \\
\hphantom{q_{n}^{(l+m)}}{} \rightarrow \frac{1}{2}\rho _{l}e^{\mathrm{i}\zeta _{l}}\big( 1+e^{2\mathrm{i}\phi _{l}^{(1)}}+\big(e^{2\mathrm{i}\phi _{l}^{(1)}}-1\big)\tanh \big(\xi _{1R}+\xi _{0}^{1-}\big) \big),
\end{gather*}
where
\begin{gather*}
A_{j}^{1-}=\frac{\alpha _{1}^{(j)}}{2\sqrt{c_{1\bar{1}}}} , \qquad e^{2\xi _{0}^{1-}}=c_{1\bar{1}},\\
e^{2\mathrm{i}\phi _{l}^{(1)}} =-\frac{(p_{1}-a_{l})(\bar{p}_{1}-a_{l})}{(1-a_{l}p_{1})(1-a_{l}\bar{p}_{1})},\qquad \zeta _{l}=n\varphi _{l}+n\pi /2+\omega_{l}t,\\
\xi _{1} =\xi _{1R}+\mathrm{i}\xi _{1I}=n\ln (\mathrm{i}p_{1})+s\big(p_{1}-p_{1}^{-1}\big)t,\qquad e^{\mathrm{i}\varphi _{l}}=a_{l} .
\end{gather*}

Soliton 2 ($\eta _{2R}\approx 0$, $\eta _{1R}\rightarrow \infty $):
\begin{gather*}
q_{n}^{(j)} \rightarrow \frac{\mathrm{i}^{n}c_{12\bar{1}}^{(j)}E_{2}}{c_{1\bar{1}}+c_{12\bar{1}\bar{2}}E_{2}\bar{E}_{2}} \rightarrow A_{j}^{2-}e^{\mathrm{i}\xi _{2I}}\operatorname{sech} \big( \xi _{2R}+\xi_{0}^{2-}\big) ,\\
q_{n}^{(l+m)} \rightarrow \frac{d_{1\bar{1}}^{(l)}+d_{12\bar{1}\bar{2}}^{(l)}E_{2}\bar{E}_{2}}{c_{1\bar{1}}+c_{12\bar{1}\bar{2}}E_{2}
\bar{E}_{2}}\rho _{l} (\mathrm{i}a_{l})^{n}e^{\omega _{l}t}\\
\hphantom{q_{n}^{(l+m)}}{}
\rightarrow \frac{1}{2}\rho _{l}e^{\mathrm{i}\big(\zeta _{l}+2\phi_{l}^{(1)}\big)}\big( 1+e^{2\mathrm{i}\phi _{l}^{(2)}}+\big(e^{2\mathrm{i}\phi
_{l}^{(2)}}-1\big)\tanh \big( \xi _{2R}+\xi _{0}^{2-}\big) \big),
\end{gather*}
where
\begin{gather*}
A_{j}^{2-}=\frac{c_{12\bar{1}}^{(j)}}{2\sqrt{c_{1\bar{1}}}\sqrt{c_{12\bar{1}\bar{2}}}} , \qquad e^{2\xi _{0}^{2-}}=\frac{c_{12\bar{1}\bar{2}}}{c_{1\bar{1}}},\\
e^{2\mathrm{i}\phi _{l}^{(2)}} = -\frac{(p_{2}-a_{l})(\bar{p}_{2}-a_{l})}{(1-a_{l}p_{2})(1-a_{l}\bar{p}_{2})}, \qquad \zeta _{0}^{2-}=n\varphi _{l}+n\pi/2+\omega _{l}t,\\
\xi _{2} =\xi _{2R}+\mathrm{i}\xi _{2I}=n\ln (\mathrm{i}p_{2})+s\big(p_{2}-p_{2}^{-1}\big)t .
\end{gather*}

(ii) After the collision ($t\rightarrow \infty $). Soliton 1 ($\eta _{1R}\approx 0$, $\eta _{2R}\rightarrow +\infty $):
\begin{gather*}
q_{n}^{(j)} \rightarrow \frac{c_{12\bar{2}}^{(j)}\mathrm{i}^{n}E_{1}}{c_{2\bar{2}}+c_{12\bar{1}\bar{2}}E_{1}\bar{E}_{1}} \rightarrow A_{j}^{1+}e^{\mathrm{i}\xi _{1I}} \operatorname{sech}\big( \xi _{1R}+\xi_{0}^{1+}\big) ,
\\
q_{n}^{(l+m)} \rightarrow \frac{d_{2\bar{2}}^{(l)}+d_{12\bar{1}\bar{2}}^{(l)}E_{1}\bar{E}_{1}}{c_{2\bar{2}}+c_{12\bar{1}
\bar{2}}E_{1}\bar{E}_{1}}\rho _{l} (\mathrm{i}a_{l})^{n}e^{\omega
_{l}t} \\
\hphantom{q_{n}^{(l+m)} }{} \rightarrow \frac{1}{2}\rho _{l}e^{\mathrm{i}\big(\zeta _{l}+2\mathrm{i}\phi
_{l}^{(2)}\big)}\big( 1+e^{2\mathrm{i}\phi _{l}^{(1)}}+\big(e^{2\mathrm{i}\phi
_{l}^{(1)}}-1\big)\tanh \big( \xi _{1R}+\xi _{0}^{1+}\big) \big),
\end{gather*}
 where
\begin{gather*}
A_{j}^{1+}=\frac{c_{12\bar{2}}^{(j)}}{2\sqrt{c_{1\bar{1}}}\sqrt{c_{12\bar{1}\bar{2}}}} ,\qquad e^{2\xi _{0}^{1+}}=\frac{c_{12\bar{1}\bar{2}}}{c_{2\bar{2}}},
\\
e^{2\mathrm{i}\phi _{l}^{(2)}}=-\frac{(p_{2}-a_{l})(\bar{p}_{2}-a_{l})}{(1-a_{l}p_{2})(1-a_{l}\bar{p}_{2})},\qquad \zeta _{l}=n\varphi _{l}+n\pi /2+2\phi _{l}^{(2)}+\omega _{l}t,
\end{gather*}

Soliton 2 ($\eta _{2R}\approx 0$, $\eta _{1R}\rightarrow -\infty $):
\begin{gather*}
q_{n}^{(j)} \rightarrow \frac{\alpha _{2}^{(j)}\mathrm{i}^{n}E_{2}}{1+c_{2\bar{2}}E_{2}\bar{E}_{2}} \rightarrow A_{j}^{2+}e^{\mathrm{i}\xi _{2I}}\operatorname{sech} \big( \xi _{2R}+\xi_{0}^{2+}\big) ,\\
q_{n}^{(l+m)} \rightarrow \frac{1+d_{2\bar{2}}^{(l)}E_{2}\bar{E}_{2}}{1+c_{2\bar{2}}E_{2}\bar{E}_{2}}\rho _{l} (\mathrm{i}a_{l})^{n}e^{\omega _{l}t}\rightarrow \frac{1}{2}\rho _{l}e^{\mathrm{i}\zeta _{l}}\big( 1+e^{2\mathrm{i}\phi _{l}^{(2)}}+\big(e^{2\mathrm{i}\phi _{l}^{(2)}}-1\big)\tanh \big(\xi _{2R}+\xi _{0}^{2+}\big) \big),
\end{gather*}
where
\begin{gather*}
A_{j}^{2+}=\frac{\alpha _{2}^{(j)}}{2\sqrt{c_{2\bar{2}}}} , \qquad e^{2\xi _{0}^{2+}}=c_{2\bar{2}} .
\end{gather*}

\subsection*{Acknowledgements}
We greatly appreciate all referees' useful comments which help us improve the present paper signif\/icantly.
The work of B.F.\ is partially supported by NSF Grant (No. 1715991) and the COS Research Enhancement Seed Grants Program at UTRGV.
The work of Y.O.\ is partly supported by JSPS Grant-in-Aid for Scientif\/ic Research (B-24340029, S-24224001, C-15K04909) and for
Challenging Exploratory Research (26610029).

\pdfbookmark[1]{References}{ref}
\LastPageEnding


\begin{thebibliography}{99}
\footnotesize\itemsep=0pt

\bibitem{ABB07}
Ablowitz M.J., Biondini G., Prinari B., Inverse scattering transform for the
 integrable discrete nonlinear {S}chr\"odinger equation with nonvanishing
 boundary conditions, \href{https://doi.org/10.1088/0266-5611/23/4/021}{\textit{Inverse Problems}} \textbf{23} (2007),
 1711--1758.

\bibitem{Ablowitzbook}
Ablowitz M.J., Clarkson P.A., Solitons, nonlinear evolution equations and
 inverse scattering, \href{https://doi.org/10.1017/CBO9780511623998}{\textit{London Mathe\-matical Society Lecture Note Series}},
 Vol.~149, Cambridge University Press, Cambridge, 1991.

\bibitem{AblowitzLadik}
Ablowitz M.J., Ladik J.F., Nonlinear dif\/ferential-dif\/ference equations,
 \href{https://doi.org/10.1063/1.522558}{\textit{J.~Math. Phys.}} \textbf{16} (1975), 598--603.

\bibitem{AL2}
Ablowitz M.J., Ladik J.F., Nonlinear dif\/ferential-dif\/ference equations and
 {F}ourier analysis, \href{https://doi.org/10.1063/1.523009}{\textit{J.~Math. Phys.}} \textbf{17} (1976), 1011--1018.

\bibitem{MarkOhta}
Ablowitz M.J., Ohta Y., Trubatch A.D., On discretizations of the vector
 nonlinear {S}chr\"odinger equation, \href{https://doi.org/10.1016/S0375-9601(99)00048-1}{\textit{Phys. Lett.~A}} \textbf{253}
 (1999), 287--304, \href{https://arxiv.org/abs/solv-int/9810014}{solv-int/9810014}.

\bibitem{APT}
Ablowitz M.J., Prinari B., Trubatch A.D., Discrete and continuous nonlinear
 {S}chr\"odinger systems, \textit{London Mathematical Society Lecture Note
 Series}, Vol.~302, Cambridge University Press, Cambridge, 2004.

\bibitem{Adler08}
Adler V.E., Postnikov V.V., On vector analogs of the modif\/ied {V}olterra
 lattice, \href{https://doi.org/10.1088/1751-8113/41/45/455203}{\textit{J.~Phys.~A: Math. Gen.}} \textbf{41} (2008), 455203,
 16~pages, \href{https://arxiv.org/abs/0808.0101}{arXiv:0808.0101}.

\bibitem{Agrawalbook}
Agrawal G.P., Nonlinear f\/iber optics, 5th ed., Elsevier Inc., London, 2013.

\bibitem{Getmanov87}
Barashenkov I.V., Getmanov B.S., Multisoliton solutions in the scheme for
 unif\/ied description of integrable relativistic massive f\/ields. Non-degenerate
 {${\mathfrak{sl}}(2,{\mathbb C})$} case, \href{https://doi.org/10.1007/BF01218485}{\textit{Comm. Math. Phys.}}
 \textbf{112} (1987), 423--446.

\bibitem{Getmanov88}
Barashenkov I.V., Getmanov B.S., Kovtun V.E., Integrable model with nontrivial
 interaction between sub- and superluminal solitons, \href{https://doi.org/10.1016/0375-9601(88)90906-1}{\textit{Phys. Lett.~A}}
 \textbf{128} (1988), 182--186.

\bibitem{Getmanov93JMP1}
Barashenkov I.V., Getmanov B.S., Kovtun V.E., The unif\/ied approach to
 integrable relativistic equations: soliton solutions over nonvanishing
 backgrounds.~{I}, \href{https://doi.org/10.1063/1.530403}{\textit{J.~Math. Phys.}} \textbf{34} (1993), 3039--3053.

\bibitem{Getmanov93JMP2}
Barashenkov I.V., Getmanov B.S., The unif\/ied approach to integrable
 relativistic equations: soliton solutions over nonvanishing
 backgrounds.~{II}, \href{https://doi.org/10.1063/1.530062}{\textit{J. Math. Phys.}} \textbf{34} (1993), 3054--3072.

\bibitem{Benney1967}
Benney D.J., Newell A.C., The propagation of nonlinear wave envelopes,
 \href{https://doi.org/10.1002/sapm1967461133}{\textit{Stud. Appl. Math.}} \textbf{46} (1967), 133--139.

\bibitem{BECReview}
Dalfovo F., Giorgini S., Pitaevskii L.P., Stringari S., Theory of
 {B}ose--{E}instein condensation in trapped gases, \href{https://doi.org/10.1103/RevModPhys.71.463}{\textit{Rev. Modern Phys.}}
 \textbf{71} (1999), 463--512, \href{https://arxiv.org/abs/cond-mat/9806038}{cond-mat/9806038}.

\bibitem{DoliwaAL}
Doliwa A., Santini P.M., Integrable dynamics of a discrete curve and the
 {A}blowitz--{L}adik hierarchy, \href{https://doi.org/10.1063/1.531119}{\textit{J.~Math. Phys.}} \textbf{36} (1995),
 1259--1273, \href{https://arxiv.org/abs/solv-int/9407005}{solv-int/9407005}.

\bibitem{Dubrovin1988vNLS}
Dubrovin B.A., Malanyuk T.M., Krichever I.M., Makhan'kov V.G., Exact solutions
 of the time-dependent {S}chr\"odinger equation with self-consistent
 potentials, \textit{Sov.~J. Part. Nucl.} \textbf{19} (1988), 252--269.

\bibitem{FengJPA2014}
Feng B.-F., General {$N$}-soliton solution to a vector nonlinear {S}chr\"odinger
 equation, \href{https://doi.org/10.1088/1751-8113/47/35/355203}{\textit{J.~Phys.~A: Math. Theor.}} \textbf{47} (2014), 355203,
 22~pages.

\bibitem{HasegawaKodama}
Hasegawa A., Kodama Y., Solitons in optical communications, Clarendon, Oxford,
 1995.

\bibitem{HasegawaTappert1}
Hasegawa A., Tappert F., Transmission of stationary nonlinear optical pulses in
 dispersive dielectric f\/ibers. {I}.~{A}nomolous dispersion, \href{https://doi.org/10.1063/1.1654836}{\textit{Appl.
 Phys. Lett.}} \textbf{23} (1973), 142--144.

\bibitem{HasegawaTappert2}
Hasegawa A., Tappert F., Transmission of stationary nonlinear optical pulses in
 dispersive dielectric f\/ibers. {II}.~{N}ormal dispersion, \href{https://doi.org/10.1063/1.1654847}{\textit{Appl. Phys.
 Lett.}} \textbf{23} (1973), 171--172.

\bibitem{HirotaBook}
Hirota R., The direct method in soliton theory, \href{https://doi.org/10.1017/CBO9780511543043}{\textit{Cambridge Tracts in
 Mathematics}}, Vol.~155, Cambridge University Press, Cambridge, 2004.

\bibitem{KannaPRE2006}
Kanna T., Lakshmanan M., Tchofo~Dinda P., Akhmediev N., Soliton collisions with
 shape change by intensity redistribution in mixed coupled nonlinear
 {S}chr\"odinger equations, \href{https://doi.org/10.1103/PhysRevE.73.026604}{\textit{Phys. Rev.~E}} \textbf{73} (2006), 026604,
 15~pages, \href{https://arxiv.org/abs/nlin.SI/0511034}{nlin.SI/0511034}.

\bibitem{AgrawalKivsharbook}
Kivshar Y.S., Agrawal G.P., Optical solitons: from f\/ibers to photonic crystals,
 Academic Press, San Diego, 2003.

\bibitem{Krokeldark1}
Kr\"okel D., Halas N.J., Giuliani G., Grischkowsky D., Dark-pulse propagation
 in optical f\/ibers, \href{https://doi.org/10.1103/PhysRevLett.60.29}{\textit{Phys. Rev. Lett.}} \textbf{60} (1988), 29--32.

\bibitem{Makhankov1982}
Makhan'kov V.G., Pashaev O.K., Nonlinear {S}chr\"odinger equation with
 noncompact isogroup, \href{https://doi.org/10.1007/BF01014793}{\textit{Theoret. and Math. Phys.}} \textbf{121} (1982),
 979--987.

\bibitem{Manakov72}
Manakov S.V., On the theory of two-dimensional stationary self-focusing of
 electromagnetic waves, \textit{Sov. Phys. JETP} \textbf{38} (1974), 248--253.

\bibitem{OhtaMaruno}
Maruno K.-I., Ohta Y., Casorati determinant form of dark soliton solutions of
 the discrete nonlinear {S}chr\"odinger equation, \href{https://doi.org/10.1143/JPSJ.75.054002}{\textit{J.~Phys. Soc. Japan}}
 \textbf{75} (2006), 054002, 10~pages, \href{https://arxiv.org/abs/nlin.SI/0506052}{nlin.SI/0506052}.

\bibitem{Narita1990}
Narita K., Soliton solution for discrete {H}irota equation, \href{https://doi.org/10.1143/JPSJ.59.3528}{\textit{J.~Phys.
 Soc. Japan}} \textbf{59} (1990), 3528--3530.

\bibitem{OhtaBrightsdCNLS}
Ohta Y., Pfaf\/f\/ian solution for coupled discrete nonlinear {S}chr\"odinger
 equation, \href{https://doi.org/10.1016/S0960-0779(98)00272-0}{\textit{Chaos Solitons Fractals}} \textbf{11} (2000), 91--95.

\bibitem{OhtaReview}
Ohta Y., Special solutions of discrete integrable systems, in Discrete
 Integrable Systems, \href{https://doi.org/10.1007/978-3-540-40357-9_3}{\textit{Lecture Notes in Phys.}}, Vol.~644, Editors
 B.~Grammaticos, Y.~Kosmann-Schwarzbach, T.~Tamizhmani, Springer, Berlin,
 2004, 57--83.

\bibitem{OhtasdCNLS}
Ohta Y., Discretization of coupled nonlinear {S}chr\"odinger equations,
 \href{https://doi.org/10.1111/j.1467-9590.2009.00443.x}{\textit{Stud. Appl. Math.}} \textbf{122} (2009), 427--447.

\bibitem{OhtaDiscreteKP}
Ohta Y., Hirota R., Tsujimoto S., Imai T., Casorati and discrete {G}ram type
 determinant representations of solutions to the discrete {KP} hierarchy,
 \href{https://doi.org/10.1143/JPSJ.62.1872}{\textit{J.~Phys. Soc. Japan}} \textbf{62} (1993), 1872--1886.

\bibitem{OhtaJianke}
Ohta Y., Wang D.-S., Yang J., General {$N$}-dark-dark solitons in the coupled
 nonlinear {S}chr\"odinger equations, \href{https://doi.org/10.1111/j.1467-9590.2011.00525.x}{\textit{Stud. Appl. Math.}} \textbf{127}
 (2011), 345--371, \href{https://arxiv.org/abs/1011.2522}{arXiv:1011.2522}.

\bibitem{ParkShin2000}
Park Q.-H., Shin H.J., Systematic construction of multicomponent optical
 solitons, \href{https://doi.org/10.1103/PhysRevE.61.3093}{\textit{Phys. Rev.~E}} \textbf{61} (2000), 3093--3106.

\bibitem{PrinariMarkGino2006}
Prinari B., Ablowitz M.J., Biondini G., Inverse scattering transform for the
 vector nonlinear {S}chr\"odinger equation with nonvanishing boundary
 conditions, \href{https://doi.org/10.1063/1.2209169}{\textit{J.~Math. Phys.}} \textbf{47} (2006), 063508, 33~pages.

\bibitem{BPrinari}
Prinari B., Vitale F., Inverse scattering transform for the focusing
 {A}blowitz--{L}adik system with nonzero boundary conditions, \href{https://doi.org/10.1111/sapm.12103}{\textit{Stud.
 Appl. Math.}} \textbf{137} (2016), 28--52.

\bibitem{Lakshmanan1995}
Radhakrishnan R., Lakshmanan M., Bright and dark soliton solutions to coupled
 nonlinear {S}chr\"odinger equations, \href{https://doi.org/10.1088/0305-4470/28/9/025}{\textit{J.~Phys.~A: Math. Gen.}}
 \textbf{28} (1995), 2683--2692.

\bibitem{Kivshar97}
Sheppard A.P., Kivshar Y.S., Polarized dark solitons in isotropic Kerr media,
 \href{https://doi.org/10.1103/PhysRevE.55.4773}{\textit{Phys. Rev.~E}} \textbf{55} (1997), 4773--4782.

\bibitem{TsuchidadCNLS}
Tsuchida T., Integrable discretization of coupled nonlinear {S}chr\"odinger
 equations, \href{https://doi.org/10.1016/S0034-4877(01)80032-X}{\textit{Rep. Math. Phys.}} \textbf{46} (2000), 269--278,
 \href{https://arxiv.org/abs/nlin.SI/0002048}{nlin.SI/0002048}.

\bibitem{TsuchidasdCNLS}
Tsuchida T., Ujino H., Wadati M., Integrable semi-discretization of the coupled
 nonlinear {S}chr\"odinger equations, \href{https://doi.org/10.1088/0305-4470/32/11/016}{\textit{J.~Phys.~A: Math. Gen.}}
 \textbf{32} (1999), 2239--2262, \href{https://arxiv.org/abs/solv-int/9903013}{solv-int/9903013}.

\bibitem{TsujimotoBookchapter}
Tsujimoto S., Integrable discretization of the integrable systems, in Applied Integrable Systems,
Editor Y.~Nakamura, Shokabo, Tokyo, 2000, 1--52.

\bibitem{Tsuzuki}
Tsuzuki T., Nonlinear waves in the {P}itaevskii--{G}ross equation,
 \href{https://doi.org/10.1007/BF00628744}{\textit{J.~Low Temp. Phys.}} \textbf{4} (1971), 441--457.

\bibitem{Mee15}
van~der Mee C., Inverse scattering transform for the discrete focusing
 nonlinear {S}chr\"odinger equation with nonvanishing boundary conditions,
 \href{https://doi.org/10.1080/14029251.2015.1023583}{\textit{J.~Nonlinear Math. Phys.}} \textbf{22} (2015), 233--264.

\bibitem{Konotop92}
Vekslerchik V.E., Konotop V.V., Discrete nonlinear {S}chr\"odinger equation
 under nonvanishing boundary conditions, \href{https://doi.org/10.1088/0266-5611/8/6/007}{\textit{Inverse Problems}} \textbf{8}
 (1992), 889--909.

\bibitem{KannaPRE2008BrightDark}
Vijayajayanthi M., Kanna T., Lakshmanan M., Bright-dark solitons and their
 collisions in mixed $N$-coupled nonlinear Schr\"odinger equations,
 \href{https://doi.org/10.1103/PhysRevA.77.013820}{\textit{Phys. Rev.~A}} \textbf{77} (2008), 013820, 18~pages,
 \href{https://arxiv.org/abs/0711.4424}{arXiv:0711.4424}.

\bibitem{Weinerdark1}
Weiner A.M., Heritage J.P., Hawkins R.J., Thurston R.N., Kirschner E.M., Leaird
 D.E., Tomlinson W.J., Experimental observation of the fundamental dark
 soliton in optical f\/ibers, \href{https://doi.org/10.1103/PhysRevLett.61.2445}{\textit{Phys. Rev. Lett.}} \textbf{61} (1988),
 2445--2448.

\bibitem{YajimaOikawa}
Yajima N., Oikawa M., A class of exactly solvable nonlinear evolution
 equations, \href{https://doi.org/10.1143/PTP.54.1576}{\textit{Progr. Theoret. Phys.}} \textbf{54} (1975), 1576--1577.

\bibitem{ZakharovPlasma}
Zakharov V.E., Collapse of {L}angumuir waves, \textit{Sov. Phys. JETP}
 \textbf{35} (1972), 908--914.

\bibitem{ZakharovShabat}
Zakharov V.E., Shabat A.B., Exact theory of two-dimensional self-focusing and
 one-dimensional self-modulation of waves in nonlinear media, \textit{Sov.
 Phys. JETP} \textbf{34} (1972), 62--69.

\bibitem{ZakharovShsbat2}
Zakharov V.E., Shabat A.B., Interaction betweem solitons in a stable medium,
 \textit{Sov. Phys. JETP} \textbf{37} (1973), 823--828.

\bibitem{YJZhang1994}
Zhang Y.J., Cheng Y., Solutions for the vector {$k$}-constrained {KP}
 hierarchy, \href{https://doi.org/10.1063/1.530716}{\textit{J.~Math. Phys.}} \textbf{35} (1994), 5869--5884.

\end{thebibliography}
\end{document}